\newtheorem{theorem}{Theorem}[section]
\newtheorem{corollary}[theorem]{Corollary}
\newtheorem{lemma}[theorem]{Lemma}
\newtheorem{proposition}[theorem]{Proposition}
\title{eSSVI Surface Calibration}  % Declares the document's title.
\author{Leo Pasquazzi\footnote{email: leo.pasquazzi@unimib.it, ORCID https://orcid.org/0000-0002-2467-2667}\\
Dipartimento di Statistica e Metodi Quantitativi,\\
Università degli Studi di Milano - Bicocca, \\
Piazza dell'Ateneo Nuovo, 1 - 20126, Milano, Italy}      % Declares the author's name.
\begin{document}             % End of preamble and beginning of text.

\maketitle                   % Produces the title.

\section*{Abstract}
In this work I test two calibration algorithms for the eSSVI volatility surface. The two algorithms are (i) the robust calibration algorithm proposed in \cite{Corbetta_Cohort_Laachir_Martini_2019} and (ii) the calibration algorithm in \cite{Mingone_2022}. For the latter I considered two types of weights in the objective function. I fitted 108 end-of-month SPXW options chains from the period 2012-2022. The option data come from FactSet. In addition to this empirical part, this paper contains also a theoretical contribution which is a sharpening of the Hendriks-Martini proposition about the existence of crossing points between two eSSVI slices.

\bigskip

\noindent \textbf{Keywords:} eSSVI volatility surface, Calibration, Arbitrage free interpolation

\bigskip

%\noindent \textbf{JEL Classification code:} 

\section{Introduction}

At the Global Derivatives \& Risk Management conference in Madrid, \citet{Gatheral_2004} presented the SVI parametrization for the implied variance smile $w(k)$ at a fixed maturity. This parametrization reads
\[w(k)=a+b\left\{\rho(k-m)+\sqrt{(k-m)^{2}+\sigma^{2}}\right\}\]
where $k=\ln K/F$ is the natural logarithm of the ratio between the strike and the forward price of the underlying, and where $a\in\mathbb{R}$, $b\geq 0$, $|\rho|<1$, $m\in\mathbb{R}$ and $\sigma>0$ are parameters governing the shape and position of the smile. %It is not difficult to verify that $\inf_{k}w(k)=a+b\sigma\sqrt{1-\rho^{2}}\geq 0$ which imposes a constraint on the choice of the parameters. 

The following appealing features of the SVI parametrization are well known: 
\begin{enumerate}
\item[(i) ] the SVI smiles are asymptotically linear in $k$ as $|k|\rightarrow\infty$ and therefore consistent with Roger Lee's moment formula \citep{Lee_2004} and 
\item[(ii) ] the large maturity limit of the implied variance smile of a Heston model with correlation parameter $\rho$ is SVI with the same value of $\rho$ \citep{Gatheral_Jacquier_2011}. 
\end{enumerate}
However, it is also well known that the SVI parametrization, in its full generality, is not arbitrage free. For example, it is not difficult to show that $\inf_{k}w(k)=a+b\sigma\sqrt{1-\rho^{2}}$ and hence it should always be required that $a+b\sigma\sqrt{1-\rho^{2}}\geq 0$. However, the latter condition is not enough to rule out butterfly arbitrage as can be seen from a well known counterexample of Axel Vogt (see \citealt{Gatheral_Jacquier_2014}). Moreover, fitting the SVI parametrization to more than a single maturity date may produce slices that cross over each other which is equivalent to the existence of calendar spread arbitrage opportunities. To overcome these issues, \citet{Gatheral_Jacquier_2014} introduced the SSVI parametrization which is a global parametrization for the whole implied total variance \textit{surface} where the fixed maturity slices are restricted to a subfamily of the SVI parametrization (the first ''S'' in front of SSVI stands for ''\textit{surface}''). In the SSVI parametrization, the implied total variance surface is given by
\begin{equation}\label{SSVI_surface}
w_{t}(k)=\frac{\theta_{t}}{2}\left\{1+\rho\varphi(\theta_{t}) k+\sqrt{(\varphi(\theta_{t})k+\rho)^{2}+(1-\rho^{2})}\right\}
\end{equation}
where $\theta_{t}$ is the ATM implied total variance at maturity $t$, $|\rho|<1$ and $\varphi(\theta_{t})$ is a smooth function from $\mathbb{R}_{+}^{*}$ to $\mathbb{R}_{+}^{*}$ such that the limit $\lim_{t\rightarrow\infty}\theta_{t}\varphi(\theta_{t})$ exists in $\mathbb{R}$. According to Theorems 4.1 and 4.2 in \citet{Gatheral_Jacquier_2014}, the SSVI surface (\ref{SSVI_surface}) is free of calendar spread arbitrage if and only if
\begin{enumerate}
\item[C1) ] $\partial_{t} \theta_{t}\geq 0$ for all $t\geq 0$ 
\item[C2) ] and $0\leq\partial_{\theta}\left(\theta\varphi(\theta)\right)\leq \frac{1}{\rho^{2}}\left(1+\sqrt{1-\rho^{2}}\right)\varphi(\theta)$ for all $\theta>0$ (the upper bound is infinite when $\rho=0$)
\end{enumerate}
and it is free of butterfly arbitrage if for all $\theta>0$ the following two conditions are both satisfied:
\begin{enumerate}
\item[B1) ] $\theta\varphi(\theta)(1+|\rho|)<4$,
\item[B2) ] $\theta\varphi(\theta)^{2}(1+|\rho|)\leq 4$.
\end{enumerate}
The latter conditions are quite close to necessary as well. In fact, Lemma 4.2 in \citet{Gatheral_Jacquier_2014} says that absence of butterfly arbitrage holds only if $\theta\varphi(\theta)(1+|\rho|)\leq 4$ for all $\theta>0$ (which is only slightly weaker than B1) and that condition B2 becomes necessary when $\theta\varphi(\theta)(1+|\rho|)=4$.

In order to make the SSVI surface more flexible, \citet{Hendriks_Martini_2019} made the $\rho$-parameter maturity dependent as well and called the resulting implied total variance surface model \textit{eSSVI surface} (the "e" in front of SSVI stands for "\textit{extended}").  Proposition 3.1 in \citet{Hendriks_Martini_2019} provides necessary and sufficient conditions for the absence of calendar spread arbitrage between two time slices. In order to state these conditions, we will indicate the parameters of two slices with $\theta_{i}$, $\varphi_{i}=\varphi(\theta_{i})$ and $\rho_{i}$, where the subscript $i$ takes on the value $1$ or $2$ according to whether the closer ($i=1$) or farther ($i=2$) maturity date is referred to. Proposition 3.1 in \citet{Hendriks_Martini_2019} says that two time slices do not cross over each other only if 
\begin{itemize}
\item[N') ] $\frac{\theta_{2}}{\theta_{1}}\geq 1$ and $\left(\frac{\theta_{2}\varphi_{2}}{\theta_{1}\varphi_{1}}\rho_{2}-\rho_{1}\right)^{2}\leq \left(\frac{\theta_{2}\varphi_{2}}{\theta_{1}\varphi_{1}}-1\right)^{2}$
\end{itemize}
and that condition N along with condition S below is sufficient to rule out the existence of crossing points:
\begin{itemize}
\item[S) ] $\frac{\varphi_{2}}{\varphi_{1}}\leq 1$ or $(\frac{\theta_{2}\varphi_{2}}{\theta_{1}\varphi_{1}}\rho_{2}-\rho_{1})^{2}\leq(\frac{\theta_{2}}{\theta_{1}}-1)(\frac{\theta_{2}\varphi_{2}^{2}}{\theta_{1}\varphi_{1}^{2}}-1)$
\end{itemize}
However, condition N' and condition S are not jointly sufficient to rule out the existence of crossing point. In fact, as can be seen from Proposition \ref{corrected_proposition} in the appendix of this paper, 
\begin{itemize}
\item when $\frac{\theta_{2}}{\theta_{1}}=1$ there are no crossing points if and only if either (i) $\rho_{1}=\rho_{2}=0$ and $\varphi_{2}/\varphi_{1}\geq1$ or (ii) $\varphi_{2}/\varphi_{1}=\rho_{1}/\rho_{2}$ and $\rho_{1}^{2}\geq \rho_{2}^{2}$
\item and when $\frac{\theta_{2}}{\theta_{1}}\neq 1$ there are no crossing points if and only if condition S holds jointly with condition
\begin{itemize}
\item[N) ] $\frac{\theta_{2}}{\theta_{1}}> 1$ and $1-\frac{\theta_{2}\varphi_{2}}{\theta_{1}\varphi_{1}}\leq\frac{\theta_{2}\varphi_{2}}{\theta_{1}\varphi_{1}}\rho_{2}-\rho_{1}\leq \frac{\theta_{2}\varphi_{2}}{\theta_{1}\varphi_{1}}-1$.
\end{itemize}
\end{itemize}
Almost all of the proof of Proposition \ref{corrected_proposition} in the appendix of this paper is built on the main ideas of the proof of Proposition 3.1 in \citet{Hendriks_Martini_2019}. As far as I know the only novelty are the result about tangency points in Lemma \ref{lemma_second_sufficient_condition} and the two final Lemmas \ref{lemma_two_intersection_points} and \ref{lemma_one_intersection_point}.

\smallskip

The next section describes the calibration algorithms which I tested: the robust algorithm of \citet{Corbetta_Cohort_Laachir_Martini_2019} and the algorithm of \citet{Mingone_2022}. The test results are summarized in Section 3.

\section{Two calibration algorithms for the eSSVI surface}

\subsection{The robust algorithm of \citet{Corbetta_Cohort_Laachir_Martini_2019}}\label{algoritmo_robusto}

The robust calibration algorithm of \citet{Corbetta_Cohort_Laachir_Martini_2019} gives rise to eSSVI surfaces which satisfy conditions B1, B2, N and the first inequality of condition S. The key ingredient of this algorithm is a reparametrization of the SSVI slices which forces them to go through the data point $(k^{*}, \theta^{*})$ which is closest to the ATM forward implied total variance. To enforce this restriction the parameter $\theta$ is expressed in terms of the parameters $\rho$, $\varphi$ and the data-driven pair $(k^{*}, \theta^{*})$, i.e. $\theta$ is taken to be 
\[\theta=\theta^{*}-\rho\theta\varphi k^{*}:=\theta^{*}-\rho\psi k^{*}\]
which is a first order approximation to the solution of the equation $w(k^{*})=\theta^{*}$, where $w(k)$ is defined as in (\ref{SSVI_surface}). 

What are the allowed values for the new parameter $\psi:=\theta\varphi$? Of course, the non-negativity constraints on $\theta^{*}$, $\theta$ and $\varphi(\theta)=\varphi$ translate immediately to condition
\begin{itemize}
\item[R0) ] 
\begin{itemize}
\item[i) ] $0\leq \psi$ if $\rho k^{*}\leq0$
\item[ii) ] $0\leq \psi\leq \theta^{*}/(\rho k^{*})$ if $\rho k^{*}>0$
%\item[iii) ] $\theta^{*}/(\rho k^{*})\leq \psi$ if $\rho k^{*}<0$
\end{itemize}
\end{itemize}
and condition B1 translates to 
\begin{itemize}
\item[R1) ] $\psi<\frac{4}{1+|\rho|}$
\end{itemize}
Moreover, using $\theta=\theta^{*}-\rho\psi k^{*}$ it is easily seen that condition B2 translates to
\begin{itemize}
\item[R2) ] $\psi^{-}\leq\psi\leq\psi^{+}$ where
\[\psi^{\pm}:=\frac{-2\rho k^{*}}{1+|\rho|}\pm\sqrt{\frac{4\rho^{2}(k^{*})^{2}}{(1+|\rho|)^{2}}+\frac{4\theta^{*}}{1+|\rho|}}.\]
\end{itemize}

Next, consider the conditions N and S for preventing calendar spread arbitrage. As in Section 1, the subscript $i=1$ will always refer to the closer maturity date. Keeping this mind it is not difficult to verify that condition N translates to
\begin{itemize}
\item[R3) ] $\theta^{*}_{2}-\rho_{2}\psi_{2}k^{*}_{2}\geq \theta^{*}_{1}-\rho_{1}\psi_{1}k^{*}_{1}$ and $\psi_{2}\geq \psi_{1}\max\left\{\frac{1+\rho_{1}}{1+\rho_{2}}, \frac{1-\rho_{1}}{1-\rho_{2}}\right\}$
\end{itemize}
and that the first inequality in condition S becomes
\begin{itemize}
\item[R4) ] $\psi_{2}(\theta^{*}_{1}-\rho_{1}\psi_{1}k^{*}_{1})\leq\psi_{1}(\theta^{*}_{2}-\rho_{2}\psi_{2}k^{*}_{2})$.
\end{itemize}
The algorithm suggested in \citet{Corbetta_Cohort_Laachir_Martini_2019} neglects the second inequality in S, which could hold when the first one fails and could therefore allow for more flexibility. 

Now, in order to describe the algorithm, we first observe that it is a sequential algorithm which starts from the closest maturity date and works forward in time finding optimal values of the parameters $k^{*}_{i}$, $\theta^{*}_{i}$, $\psi_{i}$ and $\rho_{i}$ for each each maturity date $i=1,2,\dots, n$ given the optimal values for the previous maturity date. It is based on the observation that given the parameter values for maturity date $i-1$, and given an arbitrary value of $\rho_{i}$, one can use the constraints R0 - R4 to find an interval of admissible values for $\psi_{i}$. The bounds of this interval depend on the sign of $\rho_{i}k^{*}_{i}$ and in order to write them down it is convenient to define 
\begin{gather*}
B_{1i}=\frac{4}{1+|\rho_{i}|} \quad\quad B_{2i}^{\pm}:=\psi^{\pm}_{i}:=\frac{-2\rho k_{i}^{*}}{1+|\rho_{i}|}\pm\sqrt{\frac{4\rho_{i}^{2}(k_{i}^{*})^{2}}{(1+|\rho_{i}|)^{2}}+\frac{4\theta^{*}_{i}}{1+|\rho_{i}|}},\\
B_{3i}:=\frac{\theta^{*}_{i}-\theta^{*}_{i-1}+\rho_{i-1}\psi_{i-1}k^{*}_{i-1}}{\rho_{i}k^{*}_{i}}\quad\quad B_{4i}:=\psi_{i-1}\max\left\{\frac{1+\rho_{i-1}}{1+\rho_{i}}, \frac{1-\rho_{i-1}}{1-\rho_{i}}\right\},\\
B_{5i}:=\frac{\psi_{i-1}\theta^{*}_{i}}{\theta^{*}_{i-1}-\psi_{i-1}(\rho_{i-1}k^{*}_{i-1}-\rho_{i}k^{*}_{i})}.
\end{gather*}
for $i=1,2,\dots, n$, where $\theta^{*}_{0}$, $k^{*}_{0}$, $\rho_{0}$ and $\psi_{0}$ are all defined to be zero. Now, it is not difficult to see that the constraints R0 - R4 are equivalent to the following bounds for $\psi_{i}$ for $i=1,2,\dots, n$:
\begin{itemize}
\item If $\rho_{i}k^{*}_{i}>0$, we get the lower bound $L_{i}:=\max\left\{B_{2i}^{-}, B_{4i}\right\}$ and the upper bound $U_{i}:=\min\left\{B_{1i}, B_{2i}^{+}, B_{3i}, B_{5i}\right\}$.
\item If $\rho_{i}k^{*}_{i}=0$, we get the lower bound $L_{i}:=B_{4i}$ and the upper bound $U_{i}:=\min\left\{B_{1i}, B_{2i}^{+}, B_{5i} \right\}$; in this case we must also check whether $\theta^{*}_{i}\geq \theta^{*}_{i-1}-\rho_{i-1}\psi_{i-1}k^{*}_{i-1}$, in order to make sure that the first inequality in condition R3 be satisfied.
\item If $\rho_{i}k^{*}_{i}<0$, we get the lower bound $L_{i}:=\max\left\{B_{2i}^{-}, B_{3i}, B_{4i}\right\}$ and the upper bound 
\[U_{i}:=\begin{cases}
\min\left\{B_{1i}, B_{2i}^{+}, B_{5i}\right\} & \text{ if }B_{5i}>0\\
\min\left\{B_{1i}, B_{2i}^{+}\right\} & \text{ if }B_{5i}\leq 0.
\end{cases}\]
\end{itemize}
Note that the bounds $L_{i}$ and $U_{i}$ depend on $k^{*}_{i-1}$, $k^{*}_{i}$, $\theta^{*}_{i}$, $\theta^{*}_{i-1}$, $\psi_{i-1}$, $\rho_{i-1}$ and $\rho_{i}$. In order to compute these bounds we must therefore know the optimal values of $\psi_{i-1}$ and $\rho_{i-1}$ which refer to the previous maturity date, and we must also guess a value for the parameter $\rho_{i}$ which refers to the current maturity date $i$. As we will see in a moment, the robust algorithm of \citet{Corbetta_Cohort_Laachir_Martini_2019} deals with this problem by calibrating the SSVI slices sequentially from the closest to farthest maturity date. As pointed out in \citet{Corbetta_Cohort_Laachir_Martini_2019}, one could actually adapt the algorithm and change the order in which the slices are calibrated, but usually there are good reasons not to do so.

Of course, if $L_{i}>U_{i}$, there will not exist any value of $\psi_{i}$ such that $(k^{*}_{i}, \theta^{*}_{i}, \rho_{i}, \psi_{i})$ gives rise to a SSVI slice which is free of butterfly arbitrage and/or calendar spread arbitrage with respect to the optimal slice for maturity $i-1$. For the given value of $\rho_{i}$ it is therefore impossible to avoid arbitrage. In this case the objective function to be minimized should be set equal to infinity in order to force the algorithm to consider other $\rho_{i}$ values for which $L_{i}\leq U_{i}$. For some maturity date $i$ it might also happen that $\theta^{*}_{i}<\theta^{*}_{i-1}-\rho_{i-1}\psi_{i-1}k^{*}_{i-1}$, in which case R3 cannot be satisfied with $\rho_{i}=0$. If $k^{*}_{i}\neq 0$, R3 can however be satisfied for every non zero value of $\rho_{i}$. Of course, nothing can be done in the remote case (which never occurred with the data I used) where $k^{*}_{i-1}=k^{*}_{i}=0$ and $\theta^{*}_{i}<\theta^{*}_{i-1}-\rho_{i-1}\psi_{i-1}k^{*}_{i-1}$: in this case we must increase the value of $\theta^{*}_{i}$ and make it equal $\theta^{*}_{i-1}-\rho_{i-1}\psi_{i-1}k^{*}_{i-1}$ or perhaps a little larger. Note that in the first case the algorithm might produce a slice which is affected by calendar spread arbitrage (see the case $\Theta=1$ in Proposition \ref{corrected_proposition} in the appendix). Keeping in mind these issues, we can now describe the algorithm:
\begin{itemize}
\item[0) ] Choose an objective function to minimize. In this work I used the sum of the absolute values of the differences between the observed option mid prices and their theoretical counterparts which would obtain if $w_{i}(k)$ was the Black and Scholes implied variance. Set $i=1$, i.e. start from the closest maturity date which is identified by the subscript $i=1$. 
\item[1) ] Set $\zeta=0$ (the role of $\zeta$ will become clear in a moment) and choose $r$ values for $\rho_{i}\in(-1,1)$ spaced apart at equal distances. These values will be denoted by $\rho_{i,j}$, $j=1,2,\dots r$. For each value of $\rho_{i,j}$ compute the corresponding bounds $L_{i,j}:=L_{i}$ and $U_{i,j}:=U_{i}$. In my empirical investigation I set $r=100$. At first sight this may seem a large value, but this choice is motivated by the fact that with smaller values of $r$ I obtained $L_{i,j}>U_{i,j}$ for every $j$ in some option chains.
\item[2) ] For each value of $\rho_{i,j}$, search the optimal value of $\psi_{i}$ within the corrseponding interval $(L_{i,j}, U_{i,j})$. The optimal value of $\psi_{i}$ corresponding to $\rho_{i,j}$ will be denoted $\psi_{i,j}$. In my empirical investigation I used the Brent method as implemented by the \texttt{fminbound} function from the \texttt{scipy.optimize} library for the minimum search. I set the maximum number of objective function evaluations to $1000$ and set \texttt{xtol} to \texttt{1e-8}.
\item[3) ] Compare the minima of the objective function obtained for the considered $\rho_{i,j}$-values and pick out $\rho_{i}^{*}:=\rho_{i,j}$ and $\psi_{i}^{*}:=\psi_{i,j}$ which give rise to the smallest minimum.
\item[4) ] Increment $\zeta$ by one unit and choose $r$ values for $\rho_{i}$ in the interval $(\rho_{i}^{*}-1.2/r^{\zeta}, \rho_{i}^{*}+1.2/r^{\zeta})$. Denote these values by $\rho_{i,j}$, $j=1,2,\dots r$ and go back to step 2) until $2\times 1.2\times r^{-\zeta}$ is smaller than a small value $\varepsilon$. In this work I used $\varepsilon=10^{-5}$.
\item[5) ] Define the parameters of the optimal slice at maturity date $i$ by setting $\rho_{i}:=\rho_{i}^{*}$ and $\psi_{i}:=\psi_{i}^{*}$. If the current $i$ corresponds to the last maturity date stop. Otherwise, increment the index $i$ and go back to step 1.
\end{itemize}

Once the algorithm is done we know the parameters $(\theta_{i}, \varphi_{i}, \rho_{i})$ which identify an optimal SSVI slice for each maturity date for which we have observed option prices. In order to extend these slices to a continuous variance surface we can employ the interpolation scheme described in \citet{Corbetta_Cohort_Laachir_Martini_2019}. Note that the calibrated surface is free of arbitrage of any kind.

%%%%%%%%%%%%%%%%%%%%%%%%%%%%%%%
%%%%%%%%%%%%%%%%%%%%%%%%%%%%%%%
%%%%%%%%%%%%%%%%%%%%%%%%%%%%%%%
%%%%%%%%%%%%%%%%%%%%%%%%%%%%%%%
%%%%% RIVISTO FIN QUI 23 MARZO 2023 %%%%%%%%
%%%%%%%%%%%%%%%%%%%%%%%%%%%%%%%
%%%%%%%%%%%%%%%%%%%%%%%%%%%%%%%
%%%%%%%%%%%%%%%%%%%%%%%%%%%%%%%
%%%%%%%%%%%%%%%%%%%%%%%%%%%%%%%

\subsection{The algorithm of \citet{Mingone_2022}}\label{algoritmo_globale}

The algorithm proposed in \citet{Mingone_2022} is an attempt to overcome the sequential feature of the robust algorithm of \citet{Corbetta_Cohort_Laachir_Martini_2019} which may cause a poor fit at large maturity dates. To overcome this problem, Mingone introduces a reparametrization of the eSSVI surface which leads to a rectangular parameter space for which the conditions B1, B2, N and the first inequality in condition S are all satisfied. Actually, Mingone proposes also a second rectangular reparametrization where conditions B1, N and the first inequality in condition S are still satisfied, but where only a weaker condition than B2 holds. The second reparametrization accounts for the necessary and sufficient no butterfly arbitrage conditions which are given in \citet{Martini_Mingone_2022}. However, it is more difficult to implement than the first one and for this reason I tested only the calibration algorithm which is based on the first one. In order to describe the latter, it is convenient to recast conditions B1, B2, N and the first inequality in condition S in terms of $\theta$, $\psi:=\theta\varphi$ and $\rho$. Doing so we get the conditions
\begin{itemize}
\item[G1) ] $\psi<\frac{4}{1+|\rho|}$,
\item[G2) ] $\psi^{2}\leq\frac{4\theta}{1+|\rho|}$,
\item[G3) ] $\theta_{2}\geq \theta_{1}$ and $\psi_{2}\geq \psi_{1}\max\left\{\frac{1+\rho_{1}}{1+\rho_{2}}, \frac{1-\rho_{1}}{1-\rho_{2}}\right\}$,
\item[G4) ] $\psi_{2}\leq \psi_{1}\theta_{2}/\theta_{1}$.
\end{itemize}
Now, in order to get a rectangular parameter space, Mingone introduces the auxiliary quantities
\begin{align*}
p_{i}&:=\max\left\{\frac{1+\rho_{2}}{1+\rho_{1}}, \frac{1-\rho_{2}}{1-\rho_{1}}\right\} &\text{ for }i>1,  \\
f_{i}&:=\min\left\{\frac{4}{1+|\rho_{i}|}, \sqrt{\frac{4\theta_{i}}{1+|\rho_{i}|}}\right\}&\text{ for }i\geq 1,\\
A_{\psi_{1}}&:=0 & \text{ for }i= 1,\\
A_{\psi_{i}}&:=\psi_{i-1}p_{i} & \text{ for }i> 1,\\
C_{\psi_{1}}&:=\min\left\{f_{1}, \frac{f_{2}}{p_{2}}, \frac{f_{3}}{p_{2}p_{3}}, \dots, \frac{f_{n}}{\prod_{j=2}^{n}p_{j}}\right\} & \text{ for }i= 1,\\
C_{\psi_{i}}&:=\min\left\{\frac{\psi_{i-1}}{\theta_{i-1}}\theta_{i}, f_{i}, \frac{f_{i+1}}{p_{i+1}}, \frac{f_{i+2}}{p_{i+1}p_{i+2}}, \dots, \frac{f_{n}}{\prod_{j=i+1}^{n}p_{j}}\right\} & \text{ for }i> 1.
\end{align*}
and substitutes the old parameters $\theta_{2}$, $\theta_{3}$, \dots, $\theta_{n}$ and $\psi_{1}$, $\psi_{2}$, \dots, $\psi_{n}$ with the new parameters
\[a_{i}:=\theta_{i}-\theta_{i-1}p_{i},\quad i>1,\]
and
\[c_{i}:=\frac{\psi_{i}-A_{\psi_{i}}}{C_{\psi_{i}}-A_{\psi_{i}}},\quad i\geq 1,\]
respectively. Then she shows that for every choice of
\[(\rho_{1}, \dots, \rho_{n}, \theta_{1}, a_{2}, \dots, a_{n}, c_{1}, \dots, c_{n})\in(-1,1)^{n}\times(0,\infty)^{n}\times(0,1)^{n}\]
the conditions G1 - G4, with strict inequalities in place of weak ones, are all satisfied.

As emphasized by Mingone, her algorithm aims to overcome the problems caused by the sequential feature of the robust algorithm of \citet{Corbetta_Cohort_Laachir_Martini_2019}. For this reason she considers a global objective function of the form
\[\sum_{i=1}^{n}\sum_{j}[C(K_{j},i)-\widehat{C}(K_{j},i)]^{2} w(k_{j},i),\]
where $C(K_{j},i)$ is the observed option mid price at maturity $i$ and strike $K_{j}$, $j=1,2,\dots, J_{i}$,  $\widehat{C}(K_{j},i)$ is the theoretical option price which is obtained by applying the Black and Scholes formula to the suitable variance from the SSVI slice for maturity $i$, and where the $w(k_{j},i)$'s are positive weights. For the latter Mingone suggests to use the inverses of the squared market Black and Scholes vegas in order to achieve calibration in implied volatilities at the first order. In this work I followed this suggestion. For the sake of fair comparison with the robust algorithm, I also tried to use constant weights. For minimizing the objective function I used the Trust Region Reflective algorithm as implemented by the \texttt{least\_squares} function from the \texttt{scipy.optimize} library. I set the maximum number of objective function evaluations to $500$ and used the default value \texttt{1e-8} for the termination conditions \texttt{ftol}, \texttt{xtol} and \texttt{gtol}. As initial parameter vector for the optimization algorithm I used the parameter vector corresponding to the optimal SSVI slices from the robust algorithm. This choice is also suggested by Mingone. %Actually she gives also another partially data-driven suggestion which I discarded after some preliminary trials because it often caused the algorithm to get stuck in local minima.

\section{Results}

This section describes the results of my empirical investigation. The data I used refer to SPXW options and are provided by FactSet. In order to cover a possibly wide range of market situations, I considered the option chains from the last trading date of each month starting from June 2012 until July 2022 ($122$ option chains). As usual in calibration exercises, I excluded some options from the analysis. To identify the excluded options, I first computed an implied forward price for the underlying. In order to do so, I considered all put-call pairs from which I computed an implied dividend yield according to the formula
\[q_{imp}:=\frac{1}{t}\left[\ln(C-P+Ke^{-rt})-\ln S\right]\]
where $t$ is time to maturity, $C$ and $P$ are the call and put mid-prices, $K$ is the strike price, $S$ is the underlying price and $r$ is the risk-free rate. To determine the latter I interpolated the US treasury yield curve provided by FactSet. Once I got the implied dividend yields, I averaged them for each maturity date and used the average values in order to compute forward prices for the underlying. Then I discarded all forward ITM options and all options with bid-ask percentage spread larger than $5\%$. From the remaining options I computed forward ATM implied volatilities for each maturity date $t$. I did this only for maturity dates where I was left with at least one option with forward-to-strike ratio larger and smaller than $1$. In this case, I considered the Black and Scholes implied volatility of the option whose forward-to-strike ratio is closer to $1$ as forward ATM implied volatility. Otherwise, I excluded all options with the given maturity date from the subsequent analysis. 

%\begin{figure}[h!]
%\centering
%\begin{subfigure}[b]{0.45\textwidth}
%\includegraphics[width=\textwidth]{Figure_numero_opzioni_per_chain.png}
%\caption{Total number of available SPXW options before (continuous) and after (dashed) filtering procedure.}
%\end{subfigure}
%\hfill
%\begin{subfigure}[b]{0.45\textwidth}
%\includegraphics[width=\textwidth]{Figure_numero_maturities_per_chain.png}
%\caption{Total number of available maturity dates before (continuous) and after (dashed) filtering procedure.}
%\end{subfigure}
%\caption{The graphs show the total number of available options and available maturity dates in each option chain before and after filtering.}
%\label{Figura_serie_storiche_opzioni_disponibili}
%\end{figure}

Having applied the above filtering operations, I was left with zero options for the $14$ chains of  
2012-09-28, 2018-01-31, 2013-05-31, 2017-05-31, 2013-06-28, 2012-10-31, 2018-03-29, 2012-11-30, 2012-12-31, 2018-05-31, 2013-01-31, 2013-11-29, 2017-04-28, 2012-08-31. Thus, my analysis refers to the remaining $122-14=108$ option chains. The two graphs in Figure \ref{Figura_serie_storiche_opzioni_disponibili} show the number of available SPXW options and the number of available maturity dates in each option chain before and after the filtering operations. 

\smallskip

For evaluating the fit of the calibrated eSSVI surfaces I computed four measures: \begin{itemize}
\item[i) ] the ratio 
\[F_{1}:=\frac{N_{w}}{N_{tot}}=\frac{\begin{array}{c}\text{number of theoretical option prices $\widehat{C}(K_{j},i)$}\\
\text{which are between the observed bid and ask prices}
\end{array}}{\text{total number of options used for calibration}}\]
\item[ii) ] the average absolute pricing error 
\[F_{2}:=\frac{1}{N_{tot}}\sum_{i=1}^{n}\sum_{j}|C(K_{j},i)-\widehat{C}(K_{j},i)|\]
\item[iii) ] the average of the squared pricing errors 
\[F_{3}:=\frac{1}{N_{tot}}\sum_{i=1}^{n}\sum_{j}|C(K_{j},i)-\widehat{C}(K_{j},i)|^{2}\]
\item[iv) ] the weighted average of squared pricing errors with weights $w(k_{j},i)$ given by the inverses of the squared market Black and Scholes vegas
\[F_{4}:=\frac{1}{\sum_{i=1}^{n}\sum_{j}w(k_{j},i)}\sum_{i=1}^{n}\sum_{j}|C(K_{j},i)-\widehat{C}(K_{j},i)|^{2}w(k_{j},i).\]
\end{itemize}
Of course, none of these measures on its own is perfectly fair for comparing the algorithms. In fact, one might expect $F_{1}$ to be in favor of the global algorithm with the quadratic inverse vega weighted objective function, since bid-ask spreads and vegas are usually positively correlated (see panel (a) in Figure \ref{correlations}). However, as can be seen from panels (a) and (b) in Figure \ref{Figura_measures_fit}, the global algorithm with constant weights seems to achieve smaller $F_{1}$ values on average. Also, in the comparison between the two global algorithms and the robust algorithm one might expect $F_{1}$ to be more helpful for the latter since bid-ask spreads and time to maturity are usually positively correlated (see panel (b) in Figure \ref{correlations}) and the robust algorithm tends to have problems to fit the larger maturity dates in chains with many different maturity dates. As can be seen from panel (a) in Figure \ref{Figura_measures_fit}, evidence for the latter conjecture is ambiguous.

Consider now the $F_{2}$ measure of fit. Also this measure can be expected to favor the robust algorithm since this algorithm aims to minimize sequentially, one after another, the inner sums in the definition of $F_{2}$. Moreover, in the comparison between the two implementations of the global algorithm $F_{2}$ should be in favor of the implementation with constant weights. The results shown in panel (c) and panel (d) of Figure \ref{Figura_measures_fit} are slightly supportive for these conjectures. 

Finally, as for the $F_{3}$ and $F_{4}$ measures, the results are as expected. The global algorithm with constant weights performs best w.r.t. to the $F_{3}$ measure, and the global algorithm with quadratic inverse vega weights is the best one w.r.t. $F_{4}$.

To complete the picture I will spend some words about computational aspects. As already mentioned, in both implementations of the global algorithm I used the optimal parameters from the robust algorithm as initial values. With this choice of initial values the \texttt{least\_squares} function was quite efficient in finding minima for the objective function. In fact, for the squared inverse vega weighted objective function the minimization process terminated before $500$ objective function evaluations for $93$ of the $108$ option chains because either the \texttt{ftol}, \texttt{gtol} or \texttt{xtol} termination condition was satisfied. For the objective function with constant weights the \texttt{least\_squares} function was able to find a minimum before $500$ function evaluations in $85$ of the $108$ option chains. However, my implementation of the robust algorithm could be too slow for some purposes. In fact, on average over all $2023$ calibrated eSSVI slices, it took about $5523$ objective function evaluations for finding the optimal parameters of a single slice. Motivated by this fact, I tested also the less data-driven initial values suggested in \citet{Mingone_2022} which are much faster to compute (these initial values are given by $\rho_{i}=0$, $a_{i}=\max\{\theta^{*}_{i}-\theta^{*}_{i-1}, 0\}$ and $c_{i}=0.5$ for $i=1,2,\dots, n$, where the $\theta^{*}_{i}$'s are the closest to ATM forward implied variances as defined in Section \ref{algoritmo_robusto}). As can be seen from Figure \ref{Figure_test_initial_values}, the results of this test are not very encouraging: the less data-driven initial values often lead to local minima where the $F_{1}$ measure of fit is much larger than what can be achieved through the robust initial values.

%\begin{figure}[h!]
%\centering
%\begin{subfigure}[b]{0.45\textwidth}
%\includegraphics[width=\textwidth]{Figure_test_no_data_driven_inverse_vega_weighted.png}
%\caption{Squared Inverse vega weights}
%\end{subfigure}
%\hfill
%\begin{subfigure}[b]{0.45\textwidth}
%\includegraphics[width=\textwidth]{Figure_test_no_data_driven_unweighted.png}
%\caption{Constant weights}
%\end{subfigure}
%\caption{Scatterplots of $F_{1}$ measures of fit for the global algorithm. The $x$-axes refer to the implementation with initial values from robust algorithm, the $y$-axes to the implementation with the less data-driven initial values given by $\rho_{i}=0$, $a_{i}=\max\{\theta^{*}_{i}-\theta^{*}_{i-1}, 0\}$ and $c_{i}=0.5$ for $i=1,2,\dots, n$. Dots refer to cases where both choices for the initial values led to convergence in less than 500 objective function evaluations. Triangles refer to cases where the robust initial values led to convergence in less than 500 objective function evaluations, but the less data-driven initial values did not. Squares refer to cases where the robust initial values did not lead to convergence in less than 500 objective function evaluations, but the less data-driven initial values did. Circles refer to cases where none of the two choices for the initial values led to convergence in less than 500 objective function evaluations.}
%\label{Figure_test_initial_values}
%\end{figure}

\section{Appendix: Revised proof of Proposition 3.1 in \citet{Hendriks_Martini_2019}}

Consider two eSSVI slices which we shall denote by
\[w_{i}(k)=\frac{\theta_{i}}{2}\left\{1+\rho_{i}\varphi_{i}k+\sqrt{\varphi_{i}^{2}k^{2}+2\varphi_{i}\rho_{i}k +1}\right\}, \quad i=1,2.\]
As in the main text, assume that the subscript $i=1$ refers to the closer maturity date. Then there is absence of calendar spread arbitrage if and only if $w_{1}(k)\leq w_{2}(k)$ for all $k\in\mathbb{R}$. 

Note that
\begin{equation*}
\begin{split}
w_{i}'(k)&=\frac{1}{2} \theta_{i}  \varphi_{i}  \left(\frac{k \varphi_{i} +\rho_{i} }{\sqrt{\varphi_{i}^{2}k^{2}+2\varphi_{i}\rho_{i}k +1}}+\rho_{i} \right)\\
w_{i}''(k)&=\frac{\theta_{i}  \left(1-\rho_{i}^{2}\right) \varphi_{i}^{2}}{2 (\varphi_{i}^{2}k^{2}+2\varphi_{i}\rho_{i}k +1)^{3/2}}
\end{split}
\end{equation*}
so that $w_{i}''(k)>0$ for all $k\in\mathbb{R}$. Since $w_{i}'(k)=0$ if and only if $k=k_{i}^{*}:=-\frac{2\rho_{i}}{\varphi_{i}}$, we conclude that
\[\inf_{k}w_{i}(k)=w_{i}(k_{i}^{*})=\theta_{i}(1-\rho_{i}^{2}).\]
By combining this result with the fact that $w_{i}(0)=\theta_{i}$, we see that absence of calendar spread arbitrage implies
\begin{equation}\label{noCA_1}
\Theta:=\frac{\theta_{2}}{\theta_{1}}\geq \max\left\{1, \frac{1-\rho_{1}^{2}}{1-\rho_{2}^{2}}\right\}.
\end{equation}
Another necessary condition may be obtained by considering the asymptotes of the two slices. Since 
\[2w_{i}(k)\sim\begin{cases}
\theta_{i}\varphi_{i}(1+\rho_{i})k & \text{ if }k\rightarrow \infty,\\
\theta_{i}\varphi_{i}(1-\rho_{i})k & \text{ if }k\rightarrow -\infty,
\end{cases}\]
we conclude that absence of calendar spread arbitrage also implies
\begin{equation}\label{noCA_2}
\Theta\Phi:=\frac{\theta_{2}\varphi_{2}}{\theta_{1}\varphi_{1}}\geq\max\left\{\frac{1+\rho_{1}}{1+\rho_{2}},\frac{1-\rho_{1}}{1-\rho_{2}}\right\}
\end{equation}
The latter condition is satisfied if and only if
\begin{equation}\label{noCA_2a}
\Theta\Phi\geq 1\quad\text{ and }\quad(\Theta\Phi\rho_{2}-\rho_{1})^{2}\leq(\Theta\Phi-1)^{2}.
\end{equation}

Of course, in the argument leading to the necessary condition (\ref{noCA_1}) we are tacitly assuming that $\varphi_{1}$, $\varphi_{2}$ and $\theta_{1}$ are all strictly positive and in this case it follows from (\ref{noCA_1}) that $\theta_{2}\geq\theta_{1}$, i.e. that $\Theta\geq 1$. Note that if $\varphi_{1}=0$ and/or $\theta_{1}=0$, then $w_{1}(k)=\theta_{1}$ for all $k\in\mathbb{R}$, and in this case we have absence of calendar spread arbitrage if and only if $\theta_{2}\geq\theta_{1}$ or $\theta_{2}(1-\rho_{2}^{2})\geq\theta_{1}$ according to whether $\varphi_{2}$ is also zero or not. On the other hand, if $\varphi_{2}=0$, then we have $w_{2}(k)=\theta_{2}$ for all $k\in\mathbb{R}$, and in this case it follows from the asymptotic behavior of $w_{1}(k)$ that we have absence of calendar spread arbitrage if and only if $\varphi_{1}=0$ and $\theta_{1}\leq\theta_{2}$. \textbf{In what follows we rule out these trivial cases by assuming that $\Phi:=\varphi_{2}/\varphi_{1}$ and $\Theta:=\theta_{2}/\theta_{1}$ are well defined (i.e. that their denominators are strictly positive) and that $\Phi>0$ and $\Theta\geq 1$.}

\begin{lemma}\label{lemma_necessary_conditions}
If $\theta_{1}$, $\varphi_{1}$ and $\varphi_{2}$ are all strictly positive, then there is absence of calendar spread arbitrage only if conditions (\ref{noCA_1}) and (\ref{noCA_2}) are both satisfied.
\end{lemma}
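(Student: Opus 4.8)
The plan is to use the characterisation recalled just above the lemma: absence of calendar spread arbitrage is equivalent to $w_{1}(k)\le w_{2}(k)$ for every $k\in\mathbb{R}$, and the two claimed inequalities will be extracted from this pointwise bound by testing it (i) at one well-chosen finite value of $k$ and (ii) on the two tails $k\to\pm\infty$. Everything needed has already been computed in the paragraphs preceding the lemma: each slice $w_{i}$ is strictly convex ($w_{i}''>0$), attains its minimum $\theta_{i}(1-\rho_{i}^{2})$ at $k_{i}^{*}=-2\rho_{i}/\varphi_{i}$, satisfies $w_{i}(0)=\theta_{i}$, and has affine asymptotes with $2w_{i}(k)\sim\theta_{i}\varphi_{i}(1+\rho_{i})k$ as $k\to+\infty$ and $2w_{i}(k)\sim\theta_{i}\varphi_{i}(1-\rho_{i})k$ as $k\to-\infty$. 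The standing hypothesis that $\theta_{1},\varphi_{1},\varphi_{2}$ are strictly positive is precisely what makes these minimisers and asymptotic slopes meaningful, so I would begin by recording that.

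For condition (\ref{noCA_1}) I would argue in two steps. Evaluating $w_{1}\le w_{2}$ at $k=0$ gives $\theta_{1}=w_{1}(0)\le w_{2}(0)=\theta_{2}$, hence $\Theta\ge 1$. For the second bound, note that a pointwise inequality between two functions descends to their infima: since $\inf_{k'}w_{1}(k')\le w_{1}(k)\le w_{2}(k)$ for every $k$, the constant $\inf_{k'}w_{1}(k')$ is a lower bound for $w_{2}$, whence $\inf_{k'}w_{1}(k')\le\inf_{k'}w_{2}(k')$. Substituting the two computed infima gives $\theta_{1}(1-\rho_{1}^{2})\le\theta_{2}(1-\rho_{2}^{2})$, i.e. $\Theta\ge(1-\rho_{1}^{2})/(1-\rho_{2}^{2})$ after dividing by $1-\rho_{2}^{2}>0$. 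Taking the larger of the two bounds on $\Theta$ yields (\ref{noCA_1}).

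For condition (\ref{noCA_2}) I would compare the two slices on their tails. If $w_{1}\le w_{2}$ everywhere, then in particular $w_{1}(k)\le w_{2}(k)$ for all sufficiently large $k$; since both slices have affine asymptotes, this forces the right-tail slope of $w_{1}$ not to exceed that of $w_{2}$ --- otherwise $w_{1}$ would eventually overtake $w_{2}$ --- i.e. $\theta_{1}\varphi_{1}(1+\rho_{1})\le\theta_{2}\varphi_{2}(1+\rho_{2})$. Running the same argument as $k\to-\infty$ gives $\theta_{1}\varphi_{1}(1-\rho_{1})\le\theta_{2}\varphi_{2}(1-\rho_{2})$. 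Dividing these by $\theta_{1}\varphi_{1}>0$ and by $1+\rho_{2}>0$, resp. $1-\rho_{2}>0$, turns them into $\Theta\Phi\ge(1+\rho_{1})/(1+\rho_{2})$ and $\Theta\Phi\ge(1-\rho_{1})/(1-\rho_{2})$, which together are exactly (\ref{noCA_2}); the equivalent form (\ref{noCA_2a}) is then the elementary algebraic rearrangement already noted in the text.

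I do not expect a genuine obstacle here: once the convexity, the location and value of the minimum, and the tail behaviour of a single eSSVI slice are available, the lemma is pure bookkeeping. The only points calling for a little care are keeping track of the sign of $k$ when passing to the limit on the negative tail, and the borderline case in which the two tail slopes on one side coincide --- there one would in principle need a finer (next-order) comparison, but this can occur only on the boundary of the region carved out by (\ref{noCA_2}), so it is harmless given that (\ref{noCA_2}) is stated with weak inequalities.
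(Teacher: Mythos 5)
Your proposal is correct and follows essentially the same route as the paper: the paper also obtains (\ref{noCA_1}) by comparing the values at $k=0$ and the infima $\theta_{i}(1-\rho_{i}^{2})$, and (\ref{noCA_2}) by comparing the asymptotic slopes $\theta_{i}\varphi_{i}(1\pm\rho_{i})$ on the two tails. The only cosmetic difference is that you spell out the division steps and the harmless boundary case of equal slopes, which the paper leaves implicit.
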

\smallskip

Now it arises the question whether the conditions (\ref{noCA_1}) and (\ref{noCA_2}) are sufficient as well. To answer this question we look for conditions under which the graphs of $w_{1}(k)$ and $w_{2}(k)$ have at least one point in common. I will proceed as in \citet{Hendriks_Martini_2019}, but I will try to make some steps more explicit. So let $x:=\varphi_{1}k$,
\[\alpha:=\alpha(x)=\Theta-1+(\Theta \Phi\rho_{2}-\rho_{1})x,\quad z_{1}:=z_{1}(x)=\sqrt{x^2+2 \rho_{1}x+1},\]
\[z_{2}:=z_{2}(x)=\sqrt{\Phi^{2}x^2+2 \rho_{2}\Phi x+1}\]
and note that the two eSSVI slices do have points in common if and only if the equation
\[\alpha(x)+\Theta z_{2}(x)=z_{1}(x)\]
has real solutions. Squaring twice yields the quartic polynomial
\[P(x):=4\alpha^{2}\Theta^{2}z_{2}^{2}-(z_{1}^{2}-\alpha^{2}-\Theta^{2}z_{2}^{2})^{2}\] 
where we have omitted the independent variable $x$ on the RHS. Note that every root of $P(x)$ must satisfy one (and only one) of the following conditions:
\begin{equation}\label{tipi_di_radice}
\begin{split}
\text{a) }& 2\alpha\Theta z_{2}=-(z_{1}^{2}-\alpha^{2}-\Theta^{2}z_{2}^{2})\quad\text{ and }\quad\alpha-\Theta z_{2}=\pm z_{1},\\
\text{b) }& 2\alpha\Theta z_{2}=z_{1}^{2}-\alpha^{2}-\Theta^{2}z_{2}^{2}\quad\text{ and }\quad\alpha+\Theta z_{2}=-z_{1},\\
\text{c) }& 2\alpha\Theta z_{2}=z_{1}^{2}-\alpha^{2}-\Theta^{2}z_{2}^{2}\quad\text{ and }\quad\alpha+\Theta z_{2}=z_{1}.
\end{split}
\end{equation}
Of course, a root of $P(x)$ is a point where the two slices intersect if and only if it satisfies condition c). To explore the existence of such roots we first observe that $P(x)=x^{2}Q(x)$, where 
\begin{equation}\label{Q_polynomial}
\begin{split}
Q(x):=& \left[\left(\Theta  \Phi \rho _2-\rho _1\right)^{2}-\left(\Theta\Phi-1\right)^2\right]\left[\left(\Theta\Phi+1\right)^2-\left(\Theta \Phi \rho _2-\rho _1\right)^{2}\right]x^{2}\\
&+ 4 \Theta \left[\rho _1 \left(-\Theta ^2 \Phi ^2+(\Theta -2) \Theta  \rho _2^2 \Phi ^2+2 \Theta  \Phi ^2-1\right)+\right.\\
&\quad\quad\quad\quad\left.+\rho _2 \Phi  \left(\Theta ^2 \rho _2^2 \Phi ^2-\Theta ^2 \Phi ^2+2 \Theta -1\right)+(1-2 \Theta ) \rho _2 \rho _1^2 \Phi +\rho _1^3\right]x\\
&+4 (\Theta -1) \Theta  \left(\Theta \Phi ^2 \rho _2^2-\Theta  \Phi ^2-\rho _1^2+1\right).
\end{split}
\end{equation}
Note that $x=0$ (which is a root of $P(x)$) is an intersection point if and only if $\theta_{1}=\theta_{2}$, i.e. if and only if $\Theta=1$ (in fact, $w_{1}(0)=\theta_{1}=\theta_{2}=w_{2}(0)$ if and only if $\Theta=1$). Assuming that this is the case, we will now find conditions under which the two slices do cross over each other. To this aim we consider their derivatives. With $\theta_{1}=\theta_{2}=\theta$ (i.e. with $\Theta=1$) we obtain
\[w_{i}'(0)=\theta\varphi_{i}\rho_{i}\quad\text{ and }\quad w_{i}''(0)=\frac{1}{2}\theta\varphi_{i}^{2}(1-\rho_{i}^{2}).\]
To rule out the possibility that the two slices cross over in $x=0$, we must therefore impose
\begin{equation}\label{condizioni_theta_1}
w_{1}'(0)=w_{2}'(0)\quad \text{ and }\quad w_{1}''(0)\leq w_{2}''(0).
\end{equation}
If either one of these conditions fails, the two slices cross over in $x=0$. Since we are assuming that the $\theta_{i}$'s and $\varphi_{i}$'s are all strictly positive, the conditions (\ref{condizioni_theta_1}) can be jointly satisfied only if 
\begin{itemize}
\item either $\rho_{1}=\rho_{2}=0$ and $\varphi_{2}\geq \varphi_{1}$, in which case it is easy to verify that $w_{2}(k)\geq w_{1}(k)$ for all $k\in\mathbb{R}$;
\item or $\Phi=\rho_{1}/\rho_{2}$ and $\rho_{1}^{2}\geq\rho_{2}^{2}$, in which case the constant term and the coefficient of $x$ in the polynomial $Q(x)$ do both vanish, and hence the two slices have no intersection points other than $x=0$.
\end{itemize}
These considerations prove the following lemma:
\begin{lemma}\label{lemma_caso_THETA1}
Assume that $\Phi$ and $\Theta$ are well defined and that $\Phi>0$. If $\Theta=1$, there is no calendar spread arbitrage if and only if either (i) $\rho_{1}=\rho_{2}=0$ and $\Phi\geq 1$ or (ii) $\Phi=\rho_{1}/\rho_{2}$ and $\rho_{1}^{2}\geq\rho_{2}^{2}$.
\end{lemma}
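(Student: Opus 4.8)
The plan is to exploit the fact that when $\Theta=1$ the two slices are forced to meet at $k=0$, since $w_{1}(0)=\theta_{1}=\theta_{2}=w_{2}(0)$, and to turn absence of calendar spread arbitrage into a purely local condition at that point. If $w_{1}\leq w_{2}$ on all of $\mathbb{R}$ and $w_{1}(0)=w_{2}(0)$, then $k=0$ is a global minimum of the smooth function $w_{2}-w_{1}$, so necessarily $w_{2}'(0)=w_{1}'(0)$ and $w_{2}''(0)\geq w_{1}''(0)$; conversely, if either of these fails, a Taylor expansion of $w_{2}-w_{1}$ at $0$ shows that the two graphs cross there. Hence these are exactly conditions (\ref{condizioni_theta_1}). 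Substituting $w_{i}'(0)=\theta\varphi_{i}\rho_{i}$ and $w_{i}''(0)=\tfrac12\theta\varphi_{i}^{2}(1-\rho_{i}^{2})$, the first condition reduces to $\varphi_{1}\rho_{1}=\varphi_{2}\rho_{2}$, and a short case split on whether $\rho_{1}=0$ — combined with the second condition — yields precisely the alternatives (i) (when $\rho_{1}=0$, forcing $\rho_{2}=0$ and then $\varphi_{2}\geq\varphi_{1}$) and (ii) (when $\rho_{1}\neq 0$, forcing $\Phi=\rho_{1}/\rho_{2}$ and then $\rho_{1}^{2}\geq\rho_{2}^{2}$). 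This disposes of the ``only if'' direction.

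For the ``if'' direction I would check that in each case $w_{1}\leq w_{2}$ holds globally, not just near $0$. In case (i), with $\rho_{1}=\rho_{2}=0$ and $\varphi_{2}\geq\varphi_{1}$, one has $w_{i}(k)=\tfrac{\theta}{2}\bigl(1+\sqrt{\varphi_{i}^{2}k^{2}+1}\bigr)$, which is manifestly monotone in $\varphi_{i}$, so $w_{2}\geq w_{1}$. In case (ii), with $\Phi=\rho_{1}/\rho_{2}$ (so that $\Theta\Phi\rho_{2}-\rho_{1}=0$) and $\rho_{1}^{2}\geq\rho_{2}^{2}$, I would feed these into (\ref{Q_polynomial}): the constant term of $Q$ carries the factor $\Theta-1$ and hence vanishes for $\Theta=1$, while a direct computation shows the coefficient of $x$ collapses as well, leaving $Q(x)=-(\Phi^{2}-1)^{2}x^{2}$ and therefore $P(x)=x^{2}Q(x)=-(\Phi^{2}-1)^{2}x^{4}$. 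Thus $x=0$ is the only root of $P$; since every common point of the two graphs is a root of $P$ of type c) in (\ref{tipi_di_radice}) and $x=0$ is such a point when $\Theta=1$, the slices meet only there. A Taylor expansion of $w_{2}-w_{1}$ at $0$, using $(w_{2}-w_{1})(0)=(w_{2}-w_{1})'(0)=0$ and $(w_{2}-w_{1})''(0)\geq 0$, then shows $w_{2}-w_{1}>0$ on both sides of $0$ — the degenerate subcase $(w_{2}-w_{1})''(0)=0$ forcing $\rho_{1}=\rho_{2}$ and $\varphi_{1}=\varphi_{2}$, i.e. identical slices — so $w_{1}\leq w_{2}$ everywhere.

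The computational steps (evaluating $w_{i}',w_{i}''$ at $0$, the case split, and verifying that the linear coefficient of $Q$ vanishes under $\Phi=\rho_{1}/\rho_{2}$) are routine. The one step that needs genuine care — and what I regard as the main obstacle — is the passage in case (ii) from ``$x=0$ is the unique common point'' to ``$w_{1}\leq w_{2}$ globally'': one must rule out a transversal crossing at $0$ in spite of the tangency, which is exactly why the sign of $(w_{2}-w_{1})''(0)$ enters and why the degenerate subcase has to be isolated separately. Everything else is a specialization of the apparatus already assembled for the general $\Theta\neq 1$ case.
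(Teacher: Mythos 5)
Your proposal is correct and follows essentially the same route as the paper: necessity via the first- and second-order conditions on $w_{2}-w_{1}$ at the common point $k=0$ (the paper's conditions (\ref{condizioni_theta_1})), and sufficiency by direct monotonicity in case (i) and, in case (ii), by observing that the constant and linear coefficients of $Q(x)$ vanish so that $x=0$ is the only possible intersection point. Your explicit computation of the leading coefficient $-(\Phi^{2}-1)^{2}$ and your separate treatment of the degenerate subcase $\rho_{1}=\rho_{2}$, $\varphi_{1}=\varphi_{2}$ merely spell out details the paper leaves implicit.
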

Note that conditions (\ref{noCA_1}) and (\ref{noCA_2}) do not imply condition (i) or (ii) of the previous lemma (take for example $\Phi=1.2$, $\rho_{1}=0.9$ and $\rho_{2}=0.81$) and the former are therefore not strong enough to rule out calendar spread arbitrage even if we restrict to the case where $\Theta=1$. 

\smallskip

Consider now what happens when $\Theta>1$. In this case $w_{1}(0)=\theta_{1}<\theta_{2}=w_{2}(0)$ and $x=0$ is therefore not an intersection point. To investigate the existence of intersection points we analyze the polynomial $Q(x)$. We begin with the following lemma:

\begin{lemma}\label{lemma_existence_of_roots_Q}
Assume that $\Phi>0$ and $\Theta>1$. Then $Q(x)$ is of second degree if and only if 
\begin{equation}\label{conditions_for_existence_of_roots_bis}
(\Theta\Phi\rho_{2}-\rho_{1})^{2}\neq(\Theta\Phi-1)^{2}.
\end{equation}
and in this case its discriminant is given by
\begin{equation}\label{discriminante}
\begin{split}
D&:=16\Theta(\rho_{1}^{2}-\Theta^{2}\Phi^{2}\rho_{2}^{2}+\Theta^{2}\Phi^{2}-1)^{2}\left[(\Theta\Phi\rho_{2}-\rho_{1})^{2}-(\Theta-1)(\Theta\Phi^{2}-1)\right]
\end{split}
\end{equation}
\end{lemma}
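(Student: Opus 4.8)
The plan is to treat the two assertions of the lemma in turn. For the degree of $Q(x)$, observe that the leading coefficient in \eqref{Q_polynomial} is the product
\[\bigl[(\Theta\Phi\rho_2-\rho_1)^2-(\Theta\Phi-1)^2\bigr]\bigl[(\Theta\Phi+1)^2-(\Theta\Phi\rho_2-\rho_1)^2\bigr],\]
so $Q$ has degree exactly two precisely when this product is nonzero. I would then show that the second factor is strictly positive under the standing assumptions: since $\Theta\Phi>0$ and $|\rho_1|,|\rho_2|<1$, the triangle inequality gives $|\Theta\Phi\rho_2-\rho_1|\le\Theta\Phi|\rho_2|+|\rho_1|<\Theta\Phi+1$, hence $(\Theta\Phi\rho_2-\rho_1)^2<(\Theta\Phi+1)^2$. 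Therefore the product vanishes if and only if its first factor does, i.e.\ if and only if $(\Theta\Phi\rho_2-\rho_1)^2=(\Theta\Phi-1)^2$; negating this yields exactly condition \eqref{conditions_for_existence_of_roots_bis}, which proves the first assertion.

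For the discriminant, write $Q(x)=ax^2+bx+c$ with $a,b,c$ read off from \eqref{Q_polynomial}, so that by definition $D=b^2-4ac$. It then remains to verify the polynomial identity
\[b^2-4ac=16\Theta\,(\rho_1^2-\Theta^2\Phi^2\rho_2^2+\Theta^2\Phi^2-1)^2\bigl[(\Theta\Phi\rho_2-\rho_1)^2-(\Theta-1)(\Theta\Phi^2-1)\bigr]\]
in the four variables $\Theta,\Phi,\rho_1,\rho_2$. I would carry this out by direct expansion, most conveniently with the aid of a computer algebra system, organising the computation around the quantity $g:=\rho_1^2-\Theta^2\Phi^2\rho_2^2+\Theta^2\Phi^2-1=\Theta^2\Phi^2(1-\rho_2^2)-(1-\rho_1^2)$, which one expects --- and can check --- to divide $b^2-4ac$ twice; after dividing out $16\Theta g^2$ one is left with a quadratic expression in the parameters, which should then be identified with $(\Theta\Phi\rho_2-\rho_1)^2-(\Theta-1)(\Theta\Phi^2-1)$.

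The main obstacle is precisely this last identity: $a$ is a quadratic, $b$ a cubic (in the $\rho_i$) and $c$ a quadratic polynomial, each with many terms, so $b^2-4ac$ is a sizeable polynomial whose collapse to the compact factored form on the right is not apparent by inspection and essentially forces a symbolic computation (or a very patient hand calculation tracking the factor $g$). Everything else is elementary: the sign estimate for the second factor of the leading coefficient, and the reduction of ``degree two'' to the nonvanishing of a single factor. I would therefore write out the degree statement in full and record the discriminant identity as a verified (CAS-assisted) computation, indicating $g$ as the organising device.
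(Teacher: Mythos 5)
Your proposal is correct and follows essentially the same route as the paper: the leading coefficient of $Q$ vanishes only through its first factor (the paper dismisses the second factor by noting its vanishing would force $\Theta\Phi<0$, which is your triangle-inequality estimate in different clothing), and the discriminant identity is a direct algebraic verification that the paper likewise states without displaying the expansion. Your suggestion to organise the computation around $g=\Theta^2\Phi^2(1-\rho_2^2)-(1-\rho_1^2)$ is a reasonable way to carry out that omitted calculation.
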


\begin{proof}
The coefficient of $x^{2}$ in $Q(x)$ vanishes if and only if either
\[(\Theta\Phi\rho_{2}-\rho_{1})^{2}=(\Theta\Phi-1)^{2}\quad\text{ or }\quad (\Theta\Phi\rho_{2}-\rho_{1})^{2}=(\Theta\Phi+1)^{2}.\]
The second condition implies $\Theta\Phi<0$, and hence we conclude that $Q(x)$ is of second degree if and only if condition (\ref{conditions_for_existence_of_roots_bis}) holds. In this case the discriminant of $Q(x)$ can be written as in expression (\ref{discriminante}).
\end{proof} 

From the previous lemma we know that $Q(x)$ must have real roots if condition (\ref{conditions_for_existence_of_roots_bis}) holds jointly with
\begin{equation}\label{conditions_for_existence_of_roots}
\rho_{1}^{2}-\Theta^{2}\Phi^{2}\rho_{2}^{2}+\Theta^{2}\Phi^{2}-1=0 \quad\text{ or }\quad(\Theta-1)(\Theta\Phi^{2}-1)\leq(\Theta\Phi\rho_{2}-\rho_{1})^{2}
\end{equation}
Since we are only concerned with the case where the necessary conditions (\ref{noCA_1}) and (\ref{noCA_2a}) hold, we must consider the set of $(\rho_{1},\rho_{2})$-pairs where the equality in (\ref{conditions_for_existence_of_roots}) holds and the set $(\rho_{1},\rho_{2})$-pairs where
\[(\Theta-1)(\Theta\Phi^{2}-1)\leq(\Theta\Phi\rho_{2}-\rho_{1})^{2}\leq (\Theta\Phi-1)^{2}.\]
We denote these two sets by $H_{\Theta,\Phi}$ and $R_{\Theta,\Phi}$, respectively:
\begin{equation*}
\begin{split}
H_{\Theta, \Phi}&:=\{ (\rho_{1},\rho_{2})\in(-1,1)^{2}: \rho_{1}^{2}-\Theta^{2}\Phi^{2}\rho_{2}^{2}+\Theta^{2}\Phi^{2}-1=0\},\\
R_{\Theta, \Phi}&:=\{ (\rho_{1},\rho_{2})\in(-1,1)^{2}: (\Theta-1)(\Theta\Phi^{2}-1)\leq(\Theta\Phi\rho_{2}-\rho_{1})^{2}\leq (\Theta\Phi-1)^{2}\}.
\end{split}
\end{equation*}
It will be useful to visualize these two sets. Since we have already dealt with the case $\Theta=1$, and since we are assuming the necessary condition (\ref{noCA_2a}), we need to consider only the case where $\Theta>1$ and $\Theta\Phi\geq 1$. Figure \ref{Figura_insieme_ammissibile} shows all possible shapes of $H_{\Theta, \Phi}$ and $R_{\Theta, \Phi}$. The set $H_{\Theta, \Phi}$ is the graph of a hyperbola. It is always symmetric with respect to both axes of the $(\rho_{1},\rho_{2})$-plane and its prolongation always goes through the four vertices of the square $[-1,1]^{2}\subset\mathbb{R}^{2}$. Moreover, 
\begin{itemize}
\item if $\Theta\Phi>1$, $H_{\Theta, \Phi}$ does not intersect the $\rho_{1}$ axis and intersects the $\rho_{2}$ axis in $\rho_{2}=\pm \sqrt{\frac{\Theta^{2}\Phi^{2}-1}{\Theta^{2}\Phi^{2}}}$;
\item if $\Theta\Phi=1$, $H_{\Theta, \Phi}$ reduces to the straight lines $\rho_{1}=\pm \rho_{2}$.
\end{itemize}
Also for the set $R_{\Theta, \Phi}$ there are essentially only two possible shapes:
\begin{itemize}
\item If $\Theta\Phi^{2}-1\leq 0$ (since we are assuming $\Theta>1$, this implies $\Phi<1$), the set $R_{\Theta, \Phi}$ is given by the stripe
\[S:=\{ (\rho_{1},\rho_{2})\in(-1,1)^{2}: \Theta\Phi\rho_{2}-\Theta\Phi+1\leq\rho_{1}\leq\Theta\Phi\rho_{2}+\Theta\Phi-1\}.\]
The stripe reduces to the line $\rho_{1}=\rho_{2}$ if $\Theta\Phi=1$.
\item If $\Theta\Phi^{2}-1>0$ (since we are assuming $\Theta>1$, this implies $\Theta\Phi>1$), the set $R_{\Theta, \Phi}$ is the union of the two parallel and disjoint stripes
\[S_{1}:=\{ (\rho_{1},\rho_{2})\in(-1,1)^{2}: \Theta\Phi\rho_{2}+\sqrt{(\Theta-1)(\Theta\Phi^{2}-1)}\leq\rho_{1}\leq\Theta\Phi\rho_{2}+\Theta\Phi-1\}\]
and
\[S_{2}:=\{ (\rho_{1},\rho_{2})\in(-1,1)^{2}: \Theta\Phi\rho_{2}-\Theta\Phi+1\leq\rho_{1}\leq\Theta\Phi\rho_{2}-\sqrt{(\Theta-1)(\Theta\Phi^{2}-1)}\}.\]
The two stripes reduce to the lines $\rho_{1}=\Theta\rho_{2}\pm(\Theta-1)$ if $\Phi=1$.
\end{itemize}
From the description of the sets $H_{\Theta, \Phi}$ and $R_{\Theta, \Phi}$ we see immediately that $H_{\Theta, \Phi}\cap R_{\Theta, \Phi}=\{ (\rho_{1},\rho_{2})\in(-1,1)^{2}:\rho_{1}=\rho_{2}\}$ when $\Theta\Phi=1$. Next we prove that except for this special case the intersection is empty.

\begin{lemma}\label{lemma_insieme_iperbola}
If $\Theta\Phi\neq 1$, then it follows that $S\cap H_{\Theta, \Phi}=\emptyset$.
\end{lemma}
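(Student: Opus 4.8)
The plan is to reformulate both defining conditions in terms of the two linear combinations $A:=\Theta\Phi\rho_{2}-\rho_{1}$ and $B:=\Theta\Phi\rho_{2}+\rho_{1}$, after which a one-line magnitude estimate finishes the job. Write $t:=\Theta\Phi$. Since we are only concerned with the case where the necessary condition (\ref{noCA_2a}) holds, we have $t\geq 1$, and the hypothesis $\Theta\Phi\neq 1$ then gives $t>1$ (alternatively, if $t<1$ the two inequalities defining $S$ are incompatible and $S=\emptyset$, so there is nothing to prove). First I would rewrite the equation defining $H_{\Theta,\Phi}$, namely $\rho_{1}^{2}-t^{2}\rho_{2}^{2}+t^{2}-1=0$, in the factored form $(t\rho_{2}-\rho_{1})(t\rho_{2}+\rho_{1})=t^{2}-1$, i.e. $AB=(t-1)(t+1)$. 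Then I would note that the double inequality defining $S$, after subtracting $t\rho_{2}$ throughout, is exactly $|t\rho_{2}-\rho_{1}|\leq t-1$, i.e. $|A|\leq t-1$.

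The key observation is that on the \emph{open} square $(-1,1)^{2}$ one has the strict bound $|B|=|t\rho_{2}+\rho_{1}|\leq t|\rho_{2}|+|\rho_{1}|<t+1$. Hence, if some point lay in $S\cap H_{\Theta,\Phi}$, we would obtain
\[(t-1)(t+1)=|AB|=|A|\,|B|\leq (t-1)\,|B|<(t-1)(t+1),\]
using $|A|\leq t-1$ for the first inequality and $|B|<t+1$ together with $t-1>0$ for the second. This contradiction shows $S\cap H_{\Theta,\Phi}=\emptyset$.

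I do not expect a genuine obstacle here; the content is essentially the observation that the hyperbola equation factors so that its right-hand side matches precisely the product of the two natural bounds coming from $S$ and from the square. The only points needing a word of justification are: (a) the reduction to $t>1$, which uses the standing assumption $\Theta\Phi\geq 1$ (and the fact that $S$ is empty otherwise); and (b) the strictness of $|B|<t+1$, which is where the openness of $(-1,1)^{2}$ — as opposed to its closure, whose four corners lie both on the prolongation of $H_{\Theta,\Phi}$ and on the boundary of the stripe — is genuinely used. One should also check that the argument degenerates correctly when $t=1$ (it does: the estimate then reduces to $0\leq 0$, consistent with the fact, noted just before the lemma, that $S\cap H_{\Theta,\Phi}$ is the diagonal in that case).
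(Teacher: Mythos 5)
Your proof is correct, and it takes a genuinely different route from the paper's. The paper's proof works on $H_{\Theta,\Phi}$ by writing $\Theta\Phi=\sqrt{\tfrac{1-\rho_{1}^{2}}{1-\rho_{2}^{2}}}$, substituting this into $(\Theta\Phi\rho_{2}-\rho_{1})^{2}-(\Theta\Phi-1)^{2}$, and reducing the positivity of the result to $(\rho_{1}-\rho_{2})^{2}>0$, the strictness coming from the hypothesis $\Theta\Phi\neq 1$ (which on the hyperbola forces $\rho_{1}\neq\rho_{2}$); membership in $S$ is then contradicted because it would require that difference to be $\leq 0$. You instead factor the hyperbola equation as $(\Theta\Phi\rho_{2}-\rho_{1})(\Theta\Phi\rho_{2}+\rho_{1})=(\Theta\Phi-1)(\Theta\Phi+1)$, read the stripe condition as $|\Theta\Phi\rho_{2}-\rho_{1}|\leq\Theta\Phi-1$, and get the contradiction from the triangle-inequality bound $|\Theta\Phi\rho_{2}+\rho_{1}|<\Theta\Phi+1$, whose strictness comes from the openness of $(-1,1)^{2}$ rather than from $\Theta\Phi\neq 1$ directly (that hypothesis enters only through $\Theta\Phi-1>0$, and your disposal of $\Theta\Phi<1$ via $S=\emptyset$ is fine and makes the lemma self-contained without invoking condition (\ref{noCA_2a})). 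Your version avoids square roots and makes geometrically transparent that the obstruction to the lemma sits exactly at the corners of the closed square, where the prolongation of $H_{\Theta,\Phi}$ meets the boundary lines of the stripe; the paper's version yields the slightly stronger quantitative statement that $(\Theta\Phi\rho_{2}-\rho_{1})^{2}-(\Theta\Phi-1)^{2}>0$ everywhere on $H_{\Theta,\Phi}$, though that extra information is not used later. Both arguments are complete; no gap in yours.
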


\begin{proof}
If $(\rho_{1},\rho_{2})\in H_{\Theta, \Phi}$ we must have
\[\Theta\Phi(1-\rho_{2}^{2})=1-\rho_{1}^{2}\quad\Rightarrow\quad \Theta\Phi=\sqrt{\frac{1-\rho_{1}^{2}}{1-\rho_{2}^{2}}}.\]
From the latter equality we obtain
\begin{equation*}
\begin{split}
(\Theta\Phi\rho_{2}-\rho_{1})^{2}-(\Theta\Phi-1)^{2}=2\sqrt{\frac{1-\rho_{1}^{2}}{1-\rho_{2}^{2}}}(1-\rho_{1}\rho_{2})-2(1-\rho_{1}^{2}).
\end{split}
\end{equation*}
The quantity on the RHS is positive because 
\[\frac{1-\rho_{1}^{2}}{1-\rho_{2}^{2}}(1-\rho_{1}\rho_{2})^{2}>(1-\rho_{1}^{2})^{2}\quad\Leftrightarrow\quad (\rho_{1}-\rho_{2})^{2}>0\]
and because we are assuming that $\Theta\Phi=\sqrt{\frac{1-\rho_{1}^{2}}{1-\rho_{2}^{2}}}\neq 1$. Hence we conclude that $(\rho_{1},\rho_{2})\notin S$, because otherwise we should have
\[(\Theta\Phi\rho_{2}-\rho_{1})^{2}-(\Theta\Phi-1)^{2}\leq 0,\]
\end{proof}

From now on we consider roots of $Q(x)$ as functions of $\rho_{1}$ and $\rho_{2}$. Any root of $Q(x)$ will be denoted by $x_{(\rho_{1},\rho_{2})}$. Of course the subset of the $(\rho_{1},\rho_{2})$-plane where such a root exists depends on $\Theta$ and $\Phi$. We denote this set with $Q_{\Theta,\Phi}$. Since we are only interested in $\rho_{i}$ values within the interval $(-1,1)$, we consider $Q_{\Theta,\Phi}$ as a subset of the open square $(-1,1)^{2}$. Note that $x_{(\rho_{1},\rho_{2})}$ must be a continuous function of $(\rho_{1},\rho_{2})\in Q_{\Theta,\Phi}$ and that $(\rho_{1},\rho_{2})\in Q_{\Theta,\Phi}$ only if condition (\ref{conditions_for_existence_of_roots}) holds.

As we have already seen, a root $x_{(\rho_{1},\rho_{2})}$ must be of one of the three types in (\ref{tipi_di_radice}) and only roots of type c) are intersection points. In order to determine which type applies, we will need the following functions:
\[(\rho_{1},\rho_{2})\mapsto\alpha(x_{(\rho_{1},\rho_{2})}):=\Theta-1+(\Theta \Phi\rho_{2}-\rho_{1})x_{(\rho_{1},\rho_{2})}\]
and
\begin{equation*}
\begin{split}
(\rho_{1},\rho_{2})\mapsto Z(x_{(\rho_{1},\rho_{2})})&:=z_{1}^{2}(x_{(\rho_{1},\rho_{2})})-\Theta^{2}z_{2}^{2}(x_{(\rho_{1},\rho_{2})})-\alpha^{2}(x_{(\rho_{1},\rho_{2})})\\
&=x_{(\rho_{1},\rho_{2})}^{2}+2\rho_{1}x_{(\rho_{1},\rho_{2})}+1+\\
&\quad-\Theta^{2}\left(\Phi^{2}x_{(\rho_{1},\rho_{2})}^{2}+2\Phi\rho_{2}x_{(\rho_{1},\rho_{2})}+1\right)+\\
&\quad-\left[\Theta-1+(\Theta\Phi\rho_{2}-\rho_{1})x_{(\rho_{1},\rho_{2})}\right]^{2}.
\end{split}
\end{equation*}
Note that these functions must all be continuous functions of $(\rho_{1},\rho_{2})\in Q_{\Theta,\Phi}$.

\begin{lemma}\label{lemma_degli_zeri}
Assume that $\Phi>0$ and $\Theta>1$. Then $\alpha(x_{(\rho_{1},\rho_{2})})Z(x_{(\rho_{1},\rho_{2})})=0$ only if $(\rho_{1},\rho_{2})\in H_{\Theta,\Phi}$.
\end{lemma}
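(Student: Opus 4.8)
The plan is to read everything off the factorisation
\[
P(x)=4\alpha^{2}\Theta^{2}z_{2}^{2}-Z^{2},\qquad Z:=z_{1}^{2}-\Theta^{2}z_{2}^{2}-\alpha^{2},
\]
where $\alpha$, $z_{1}$, $z_{2}$, $Z$ are understood to be evaluated at the point in question and $z_{1}^{2}=x^{2}+2\rho_{1}x+1$, $z_{2}^{2}=\Phi^{2}x^{2}+2\Phi\rho_{2}x+1$ are honest polynomials. Since $P(x)=x^{2}Q(x)$, any root $x_{(\rho_{1},\rho_{2})}$ of $Q$ satisfies $P(x_{(\rho_{1},\rho_{2})})=0$, that is, $Z(x_{(\rho_{1},\rho_{2})})^{2}=4\Theta^{2}\alpha(x_{(\rho_{1},\rho_{2})})^{2}z_{2}(x_{(\rho_{1},\rho_{2})})^{2}$. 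The first thing I would record is that $z_{1}^{2}$ and $z_{2}^{2}$ are strictly positive for every real $x$, because their discriminants in $x$ equal $4(\rho_{1}^{2}-1)<0$ and $4\Phi^{2}(\rho_{2}^{2}-1)<0$ (using $|\rho_{i}|<1$ and $\Phi>0$). Hence $\Theta^{2}z_{2}^{2}>0$ at $x_{(\rho_{1},\rho_{2})}$, and the displayed identity forces $\alpha(x_{(\rho_{1},\rho_{2})})$ and $Z(x_{(\rho_{1},\rho_{2})})$ either to vanish together or not at all; in particular the hypothesis $\alpha(x_{(\rho_{1},\rho_{2})})Z(x_{(\rho_{1},\rho_{2})})=0$ is equivalent to $\alpha(x_{(\rho_{1},\rho_{2})})=Z(x_{(\rho_{1},\rho_{2})})=0$.

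Next I would set $x:=x_{(\rho_{1},\rho_{2})}$ and $d:=\rho_{1}-\Theta\Phi\rho_{2}$. The equation $\alpha(x)=\Theta-1+(\Theta\Phi\rho_{2}-\rho_{1})x=0$ together with $\Theta>1$ gives $d\neq0$ and $x=(\Theta-1)/d$. Because $\alpha(x)=0$, the relation $Z(x)=0$ collapses to $z_{1}(x)^{2}=\Theta^{2}z_{2}(x)^{2}$, i.e.
\[
(1-\Theta^{2}\Phi^{2})x^{2}+2(\rho_{1}-\Theta^{2}\Phi\rho_{2})x+(1-\Theta^{2})=0 .
\]
The concluding step is to substitute $x=(\Theta-1)/d$ here, multiply through by $d^{2}$, and simplify. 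Using $1-\Theta^{2}=-(\Theta-1)(\Theta+1)$, $\rho_{1}-\Theta^{2}\Phi\rho_{2}=d-\Theta\Phi\rho_{2}(\Theta-1)$ and $d+\Theta\Phi\rho_{2}=\rho_{1}$, the factor $\Theta-1$ comes out twice and what survives is $1-\Theta^{2}\Phi^{2}-d^{2}-2\Theta\Phi\rho_{2}d=0$; completing the square in $d$ turns this into $\rho_{1}^{2}-\Theta^{2}\Phi^{2}\rho_{2}^{2}+\Theta^{2}\Phi^{2}-1=0$, which is exactly the defining equation of $H_{\Theta,\Phi}$.

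The conceptual content is all in the first paragraph — the coincidence, forced by $P=x^{2}Q$, of the two ``branch equations'' hidden inside $P$ — and it is short. The one part that needs care is the final algebraic reduction: I would want to verify explicitly that after the substitution $x=(\Theta-1)/d$ the factor $\Theta-1$ cancels to the full order claimed and that no stray term remains, so that the identity really does reduce to membership in $H_{\Theta,\Phi}$. I expect this bookkeeping, rather than any genuine difficulty, to be the main obstacle.
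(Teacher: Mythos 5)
Your proposal is correct and takes essentially the same route as the paper: both first use the strict positivity of $z_{2}^{2}$ together with $Z^{2}=4\Theta^{2}\alpha^{2}z_{2}^{2}$ at a root of $Q$ to force $\alpha$ and $Z$ to vanish simultaneously, and then combine $\alpha=0$ with $z_{1}^{2}=\Theta^{2}z_{2}^{2}$ to land on the defining equation of $H_{\Theta,\Phi}$. The only (cosmetic) difference is the elimination step -- the paper squares the $\alpha=0$ relation and subtracts it from $z_{1}^{2}=\Theta^{2}z_{2}^{2}$, using $x\neq 0$, whereas you substitute $x=(\Theta-1)/d$ directly; your final algebraic reduction does check out, with $(\Theta-1)$ cancelling exactly as you claim.
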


\begin{proof}
The proof is essentially the same as the proof of Lemma A.2 in \citet{Hendriks_Martini_2019}. Since $x_{(\rho_{1},\rho_{2})}$ must satisfy one of the two equalities
\[2\alpha(x_{(\rho_{1},\rho_{2})})\Theta z_{2}(x_{(\rho_{1},\rho_{2})}) = \pm Z(x_{(\rho_{1},\rho_{2})})\]
(otherwise $x_{(\rho_{1},\rho_{2})}$ is not a root of $P(x)$ and hence neither a root of $Q(x)$) and since $z_{2}(x)>0$ for all $x\in\mathbb{R}$, we conclude that $\alpha(x_{(\rho_{1},\rho_{2})})Z(x_{(\rho_{1},\rho_{2})})=0$ if and only if $\alpha(x_{(\rho_{1},\rho_{2})})$ and $Z(x_{(\rho_{1},\rho_{2})})$ do both vanish. In this case we must have 
\begin{equation}\label{dim_eq2}
\begin{gathered}
\alpha(x_{(\rho_{1},\rho_{2})})=0\quad\Rightarrow\quad \Theta\left(1+\rho_{2}\Phi x_{(\rho_{1},\rho_{2})}\right)=1+\rho_{1}x_{(\rho_{1},\rho_{2})}\quad\Rightarrow\\
\Theta^{2}\left(1+2\Phi\rho_{2} x_{(\rho_{1},\rho_{2})} + \Phi^{2}\rho_{2}^{2} x_{(\rho_{1},\rho_{2})}^{2}\right)=1+2\rho_{1}x_{(\rho_{1},\rho_{2})}+\rho_{1}^{2}x_{(\rho_{1},\rho_{2})}^{2}
\end{gathered}
\end{equation}
and we must also have 
\begin{equation}\label{dim_eq1}
\begin{gathered}
\Theta^{2}z_{2}^{2}(x_{(\rho_{1},\rho_{2})})=z_{1}^{2}(x_{(\rho_{1},\rho_{2})})\quad\Rightarrow\\
\Rightarrow\quad\Theta^{2}\left(\Phi^{2}x_{(\rho_{1},\rho_{2})}^{2}+2\Phi\rho_{2}x_{(\rho_{1},\rho_{2})}+1\right)=x_{(\rho_{1},\rho_{2})}^{2}+2\rho_{1}x_{(\rho_{1},\rho_{2})}+1,
\end{gathered}
\end{equation}
Subtracting equation (\ref{dim_eq2}) from equation (\ref{dim_eq1}) yields
\[x_{(\rho_{1},\rho_{2})}^{2}\left(\Theta^{2}\Phi^{2}-\Theta^{2}\Phi^{2}\rho_{2}^{2}-1+\rho_{1}^{2}\right)=0\]
Since $x_{(\rho_{1},\rho_{2})}$ must be different from zero (otherwise we would have $\alpha(x_{(\rho_{1},\rho_{2})})=\Theta-1>0$ contrary to our assumption), this implies that $\Theta^{2}\Phi^{2}-\Theta^{2}\Phi^{2}\rho_{2}^{2}-1+\rho_{1}^{2}=0$ which is equivalent to $(\rho_{1},\rho_{2})\in H_{\Theta,\Phi}$.
\end{proof}

\begin{corollary}\label{corollario_segno_alphaZ}
Assume that $\Phi>0$, $\Theta>1$ and that $A$ is a connected subset of $Q_{\Theta,\Phi}$ which does not intersect the set
\[H_{\Theta, \Phi}:=\{ (\rho_{1},\rho_{2})\in(-1,1)^{2}: \Theta^{2}\Phi^{2}(1-\rho_{2}^{2})=(1-\rho_{1}^{2})\}.\]
Then it follows that function $(\rho_{1},\rho_{2})\mapsto\alpha(x_{(\rho_{1},\rho_{2})})Z(x_{(\rho_{1},\rho_{2})})$ does not change sign on $A$. 
\end{corollary}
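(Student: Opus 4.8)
The plan is to deduce the corollary from the intermediate value theorem applied to the function
\[
g:(\rho_{1},\rho_{2})\longmapsto \alpha(x_{(\rho_{1},\rho_{2})})\,Z(x_{(\rho_{1},\rho_{2})})
\]
restricted to the connected set $A$.

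First I would check that $g$ is continuous on $Q_{\Theta,\Phi}$. Both $\alpha(x)$ and $Z(x)$ are polynomial in $x$ with coefficients that are polynomials in $(\rho_{1},\rho_{2})$, and the root map $(\rho_{1},\rho_{2})\mapsto x_{(\rho_{1},\rho_{2})}$ is continuous on $Q_{\Theta,\Phi}$, as already noted just before Lemma \ref{lemma_degli_zeri}; hence the composition $g$, and with it the restriction $g|_{A}$, is continuous. Then I would invoke Lemma \ref{lemma_degli_zeri}: under the running assumptions $\Phi>0$ and $\Theta>1$, the vanishing $g(\rho_{1},\rho_{2})=0$ forces $(\rho_{1},\rho_{2})\in H_{\Theta,\Phi}$. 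Since $A\cap H_{\Theta,\Phi}=\emptyset$ by hypothesis, $g$ has no zeros on $A$, i.e. $g(A)\subseteq\mathbb{R}\setminus\{0\}$.

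Finally, $A$ connected and $g$ continuous force $g(A)$ to be a connected subset of $\mathbb{R}\setminus\{0\}$, hence entirely contained in $(0,\infty)$ or in $(-\infty,0)$. Therefore $\alpha(x_{(\rho_{1},\rho_{2})})Z(x_{(\rho_{1},\rho_{2})})$ is of constant (in fact strict) sign on $A$, which is exactly the assertion of the corollary.

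The only step I would want to treat with care — and the one I expect to be the main obstacle — is the continuity of the root selection $(\rho_{1},\rho_{2})\mapsto x_{(\rho_{1},\rho_{2})}$: one needs to follow a single branch of the (at most two) roots of the quadratic $Q$ continuously across $A$. This is legitimate because the coefficients of $Q$ depend polynomially, hence continuously, on $(\rho_{1},\rho_{2})$, and, by Lemma \ref{lemma_existence_of_roots_Q}, the discriminant $D$ governs when the two roots collide; away from that degeneracy locus the two branches stay well separated and each depends continuously on the coefficients. Granting this, the remainder of the argument is a bare application of the intermediate value theorem.
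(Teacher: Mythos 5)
Your argument is correct and coincides with the paper's (implicit) reasoning: the corollary is stated there as an immediate consequence of Lemma \ref{lemma_degli_zeri} together with the continuity of $(\rho_{1},\rho_{2})\mapsto x_{(\rho_{1},\rho_{2})}$, $\alpha(x_{(\rho_{1},\rho_{2})})$ and $Z(x_{(\rho_{1},\rho_{2})})$ asserted just before that lemma, combined with connectedness of $A$ exactly as in your intermediate-value argument. Your extra care about following a single continuous root branch is a fair point but does not change the route; it is the same proof.
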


The previous corollary will be useful to distinguish whether a given root $x_{(\rho_{1},\rho_{2})}$ is of type a) rather than of type b) or c). Once we know that it is not of type a), we will apply the next lemma in order to find out whether it is of type b) or c).

\begin{lemma}\label{lemma_tipo_radice_bc}
Let $A$ be a connected subset of $Q_{\Theta, \Phi}$ such that $\alpha(x_{(\rho_{1},\rho_{2})})Z(x_{(\rho_{1},\rho_{2})})>0$ for all $(\rho_{1},\rho_{2})\in A$. Then it follows that the function $(\rho_{1},\rho_{2})\mapsto\alpha(x_{(\rho_{1},\rho_{2})})+\Theta z_{2}(x_{(\rho_{1},\rho_{2})})$ does not change sign on $A$.
\end{lemma}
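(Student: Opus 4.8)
The plan is to exploit the trichotomy (\ref{tipi_di_radice}) satisfied by every root of $P(x)=x^{2}Q(x)$, combine it with the sign hypothesis on $\alpha Z$ to eliminate the troublesome case, and then conclude by a connectedness argument. Concretely, writing $Z:=z_{1}^{2}-\Theta^{2}z_{2}^{2}-\alpha^{2}$, case a) of (\ref{tipi_di_radice}) is the statement $2\alpha\Theta z_{2}=-Z$, whereas cases b) and c) both say $2\alpha\Theta z_{2}=Z$. Multiplying these identities by $\alpha$ and using that $\Theta>0$ and $z_{2}(x)=\sqrt{(\Phi x+\rho_{2})^{2}+1-\rho_{2}^{2}}>0$ for every $x$ (since $|\rho_{2}|<1$), one gets $\alpha Z=-2\Theta z_{2}\alpha^{2}\leq 0$ in case a) and $\alpha Z=2\Theta z_{2}\alpha^{2}\geq 0$ in cases b) and c). Hence the hypothesis $\alpha(x_{(\rho_{1},\rho_{2})})Z(x_{(\rho_{1},\rho_{2})})>0$ on $A$ rules out case a) entirely (and also forces $\alpha(x_{(\rho_{1},\rho_{2})})\neq 0$): for every $(\rho_{1},\rho_{2})\in A$ the root $x_{(\rho_{1},\rho_{2})}$ is of type b) or of type c).

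Next I would read off the value of $\alpha+\Theta z_{2}$ in the two surviving cases. In case b), (\ref{tipi_di_radice}) gives $\alpha(x_{(\rho_{1},\rho_{2})})+\Theta z_{2}(x_{(\rho_{1},\rho_{2})})=-z_{1}(x_{(\rho_{1},\rho_{2})})$, and in case c) it gives $\alpha(x_{(\rho_{1},\rho_{2})})+\Theta z_{2}(x_{(\rho_{1},\rho_{2})})=z_{1}(x_{(\rho_{1},\rho_{2})})$. Since $z_{1}(x)=\sqrt{(x+\rho_{1})^{2}+1-\rho_{1}^{2}}>0$ for every $x$ because $|\rho_{1}|<1$, the function $g(\rho_{1},\rho_{2}):=\alpha(x_{(\rho_{1},\rho_{2})})+\Theta z_{2}(x_{(\rho_{1},\rho_{2})})$ is strictly negative at points of $A$ where the root is of type b) and strictly positive where it is of type c); in particular $g$ never vanishes on $A$. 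Because $(\rho_{1},\rho_{2})\mapsto x_{(\rho_{1},\rho_{2})}$ is continuous on $Q_{\Theta,\Phi}$ and $\alpha$, $z_{2}$ are continuous in their arguments, $g$ is continuous on $A$; a continuous nowhere-vanishing function on a connected set keeps a constant sign, which is exactly the assertion of the lemma.

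I do not expect a genuine obstacle here: the argument is essentially bookkeeping on (\ref{tipi_di_radice}) plus the intermediate value theorem. The one point deserving a word of care is the implicit claim that, along the connected set $A$, one follows a single continuous branch $(\rho_{1},\rho_{2})\mapsto x_{(\rho_{1},\rho_{2})}$, so that ``not changing sign'' is a meaningful statement about that branch; this is precisely the continuity already recorded in the remark preceding the lemma, and once it is granted the conclusion is immediate.
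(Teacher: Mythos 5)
Your proof is correct and follows essentially the same route as the paper's: the hypothesis $\alpha Z>0$ forces the root to be of type b) or c), so $\alpha(x_{(\rho_{1},\rho_{2})})+\Theta z_{2}(x_{(\rho_{1},\rho_{2})})=\pm z_{1}(x_{(\rho_{1},\rho_{2})})\neq 0$, and continuity on the connected set $A$ gives the constant sign. Your explicit sign computation $\alpha Z=\mp 2\Theta z_{2}\alpha^{2}$ merely spells out the elimination of case a), which the paper states without detail.
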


\begin{proof}
Under the assumptions of the lemma $x_{(\rho_{1},\rho_{2})}$ must be a root of either type b) or c) in (\ref{tipi_di_radice}). Hence we must have either
\[\alpha(x_{(\rho_{1},\rho_{2})})+\Theta z_{2}(x_{(\rho_{1},\rho_{2})})= z_{1}(x_{(\rho_{1},\rho_{2})})\]
or 
\[\alpha(x_{(\rho_{1},\rho_{2})})+\Theta z_{2}(x_{(\rho_{1},\rho_{2})})=-z_{1}(x_{(\rho_{1},\rho_{2})}).\]
The conclusion of the lemma follows now from the fact that $\alpha(x_{(\rho_{1},\rho_{2})})+\Theta z_{2}(x_{(\rho_{1},\rho_{2})})$ is continuous and that $z_{1}(x)>0$ for all $x\in\mathbb{R}$.
\end{proof}

Now we are finally ready to investigate about the existence of intersection points. We start from the special cases where $\Phi=1$ or $\Theta\Phi=1$.

\begin{lemma}\label{casi_particolari}
Assume that $\Theta>1$ and that either $\Phi=1$ or $\Theta\Phi=1$. Then $(\rho_{1},\rho_{2})\in R_{\Theta,\Phi}$ implies $w_{1}(k)< w_{2}(k)$ for all $k\in\mathbb{R}$.
\end{lemma}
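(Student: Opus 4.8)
The plan is to treat the cases $\Theta\Phi=1$ and $\Phi=1$ separately, since in each of them the set $R_{\Theta,\Phi}$ degenerates and leaves no slack in the necessary condition (\ref{noCA_2a}). When $\Theta\Phi=1$, the constraint $(\Theta\Phi\rho_{2}-\rho_{1})^{2}\le(\Theta\Phi-1)^{2}=0$ forces $\rho_{1}=\rho_{2}$, so $R_{\Theta,\Phi}$ is the diagonal; when $\Phi=1$ (hence $\Theta\Phi^{2}=\Theta>1$), $R_{\Theta,\Phi}$ collapses to the two lines $\rho_{1}=\Theta\rho_{2}\pm(\Theta-1)$. In both cases I would prove $w_{1}<w_{2}$ directly, by rewriting $w_{2}(k)-w_{1}(k)$ in a form whose sign is manifest; for $\Phi=1$ I will also indicate a variant based on Corollary \ref{corollario_segno_alphaZ} and Lemma \ref{lemma_tipo_radice_bc}.

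For $\Theta\Phi=1$ write $\rho_{1}=\rho_{2}=:\rho$, and note that $\Theta\Phi=1$ means $\psi_{1}=\psi_{2}=:\psi$ (with $\psi_{i}:=\theta_{i}\varphi_{i}$); here the machinery of Corollary \ref{corollario_segno_alphaZ} is unavailable because $R_{\Theta,\Phi}\subseteq H_{\Theta,\Phi}$. Substituting $\varphi_{i}=\psi/\theta_{i}$ and pulling $\theta_{i}$ inside the square root gives $w_{i}(k)=\tfrac12 F(\theta_{i})$ with $F(\theta):=\theta+\rho\psi k+\sqrt{(\psi k+\rho\theta)^{2}+(1-\rho^{2})\theta^{2}}$, $\theta>0$. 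I would then check that $F$ is strictly increasing: with $g(\theta)$ denoting the square root, $g(\theta)^{2}-(\rho\psi k+\theta)^{2}=(1-\rho^{2})\psi^{2}k^{2}\ge0$, hence $F'(\theta)=1+(\rho\psi k+\theta)/g(\theta)\ge0$, strictly unless $\psi k=0$, and when $\psi k=0$ one simply has $F(\theta)=\theta(1+\sqrt{1-\rho^{2}})$. Since $\theta_{2}>\theta_{1}$, this gives $w_{2}(k)=\tfrac12 F(\theta_{2})>\tfrac12 F(\theta_{1})=w_{1}(k)$ for all $k$.

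For $\Phi=1$ put $\varphi_{1}=\varphi_{2}=:\varphi$ and $x:=\varphi k$. The map $(k,\rho_{1},\rho_{2})\mapsto(-k,-\rho_{1},-\rho_{2})$ leaves each slice unchanged and interchanges the two lines, so it is enough to treat $\rho_{1}=\Theta\rho_{2}+(\Theta-1)$. Using $\theta_{2}=\Theta\theta_{1}$ and $\Theta\rho_{2}-\rho_{1}=-(\Theta-1)$, the claim $w_{2}(k)>w_{1}(k)$ becomes
\[\Theta\sqrt{x^{2}+2\rho_{2}x+1}+(\Theta-1)(1-x)>\sqrt{x^{2}+2\rho_{1}x+1}.\]
The left-hand side is positive (immediate for $x\le1$; for $x>1$, squaring yields a quadratic in $x$ with positive leading coefficient $2\Theta-1$ and negative discriminant, since $|\rho_{1}|<1$ forces $\Theta^{2}\rho_{2}+(\Theta-1)^{2}\in(1-2\Theta,1)$), so I may square both sides; substituting $\rho_{1}=\Theta\rho_{2}+\Theta-1$ and simplifying then reduces the inequality to
\[(1-x)\sqrt{x^{2}+2\rho_{2}x+1}>-\bigl(x^{2}-(1-\rho_{2})x+1\bigr).\]
Since $x^{2}-(1-\rho_{2})x+1>0$ everywhere (its discriminant $(1-\rho_{2})^{2}-4$ is negative), this is trivial for $x\le1$, while for $x>1$ both sides are positive and a final squaring reduces it to the polynomial identity $\bigl(x^{2}-(1-\rho_{2})x+1\bigr)^{2}-(x-1)^{2}\bigl(x^{2}+2\rho_{2}x+1\bigr)=(1+\rho_{2})^{2}x^{2}$, whose right side is positive for $x\ne0$.

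I expect $\Phi=1$ to be the delicate case: on the line $\rho_{1}=\Theta\rho_{2}+(\Theta-1)$ one has $2w_{i}(k)\sim\theta_{i}(1+\rho_{i})\varphi k$ with $\theta_{1}(1+\rho_{1})=\theta_{2}(1+\rho_{2})$, so the two slices share a right-hand asymptote and $w_{2}(k)-w_{1}(k)\to0$ as $k\to+\infty$; the strict inequality has no margin at infinity, and the argument closes only because of the exact cancellation in the quartic identity above. As an alternative for this case, one checks that the two lines constituting $R_{\Theta,\Phi}$ are disjoint from $H_{\Theta,\Phi}$ (a short argument that otherwise would force $\Theta=1$), so Corollary \ref{corollario_segno_alphaZ} applies along each line; evaluating $\alpha(x_{(\rho_{1},\rho_{2})})Z(x_{(\rho_{1},\rho_{2})})$ and then, via Lemma \ref{lemma_tipo_radice_bc}, $\alpha(x_{(\rho_{1},\rho_{2})})+\Theta z_{2}(x_{(\rho_{1},\rho_{2})})$ at one convenient point of the line shows no root of $Q$ is of type c), whence the slices never intersect and $w_{1}<w_{2}$ follows from $w_{1}(0)=\theta_{1}<\theta_{2}=w_{2}(0)$ by continuity.
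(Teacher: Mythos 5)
Your proposal is correct in substance, but it takes a genuinely different route from the paper. The paper's proof is a two-line algebraic collapse: since $\Theta>1$ gives $w_{1}(0)<w_{2}(0)$, the slices can meet only where the polynomial $Q(x)$ has a real root; substituting the degenerate relations defining $R_{\Theta,\Phi}$ in the two special cases ($\rho_{1}=\Theta\rho_{2}\pm(\Theta-1)$ when $\Phi=1$, $\rho_{1}=\rho_{2}$ when $\Theta\Phi=1$) turns $Q(x)$ into the nonzero constants $-4\Theta^{2}(\Theta-1)^{2}(\rho_{2}\pm1)^{2}$ and $4(\Theta-1)^{2}(1-\rho^{2})$, so $Q$ has no roots, hence no intersection points, hence $w_{1}<w_{2}$ everywhere. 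You instead prove the inequality directly: for $\Theta\Phi=1$ by observing $\psi_{1}=\psi_{2}$ and that, at fixed $\psi$ and $\rho$, the slice is strictly increasing in $\theta$; for $\Phi=1$ by an explicit reduction (which I checked, and which is exact) to the quartic identity $(x^{2}-(1-\rho_{2})x+1)^{2}-(x-1)^{2}(x^{2}+2\rho_{2}x+1)=(1+\rho_{2})^{2}x^{2}$. Your route is more elementary and self-contained — it bypasses $Q(x)$, the root-type trichotomy and the test-point machinery altogether, and the monotonicity argument for $\Theta\Phi=1$ is a nice structural explanation of why that case is safe — at the price of heavier hand computation in the $\Phi=1$ case; the paper's route is shorter and stays inside the framework it reuses for all the remaining cases. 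Your closing alternative sketch for $\Phi=1$ via Corollary \ref{corollario_segno_alphaZ} is actually unnecessary there, since on those lines $Q$ has no roots at all to classify.

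Two small slips to fix, neither fatal. First, when $\psi k=0$ one has $F(\theta)=2\theta$, not $\theta(1+\sqrt{1-\rho^{2}})$; monotonicity still holds, so the conclusion is unaffected. Second, in the $\Phi=1$ case, for $x>1$ both sides of $(1-x)\sqrt{x^{2}+2\rho_{2}x+1}>-(x^{2}-(1-\rho_{2})x+1)$ are negative, not positive as you claim; as written, squaring under your stated sign assumption would reverse the required inequality and contradict your own identity. The correct last step is to rearrange to $x^{2}-(1-\rho_{2})x+1>(x-1)\sqrt{x^{2}+2\rho_{2}x+1}$, where both sides are indeed positive, and then square — which is precisely the inequality your quartic identity delivers, so the endpoint of your computation is right once the signs are handled properly.
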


\begin{proof}
If $\Theta>1$, we must have $w_{1}(0)<w_{2}(0)$ and thus there exist intersection points only if the polynomial $Q(x)$ has real roots different from $x=0$. Now, consider first the case $\Phi=1$. Since we are assuming that $(\rho_{1},\rho_{2})\in R_{\Theta,\Phi}$, it follows that $\rho_{1}=\Theta\rho_{2}\pm(\Theta-1)$ (see the description of the set $R_{\Theta,\Phi}$ for the special case where $\Phi=1$). However, it can be verified that in this case we must have
\[Q(x)=-4 \Theta ^2(\Theta -1)^2  \left(\rho _2\pm 1\right){}^2<0\]
which has no roots at all.

Next, consider the case $\Theta\Phi=1$. In this case we must have $\rho_{1}=\rho_{2}$ (see the description of the set $R_{\Theta,\Phi}$ for the special case where $\Theta\Phi=1$). Substituting $\rho_{1}=\rho_{2}=\rho$ and $\Phi=1/\Theta$ in the coefficients of $Q(x)$ shows that
\[Q(x)=4 (\Theta -1)^2 \left(1-\rho^{2}\right)>0\]
which has no roots at all.
\end{proof}

Next, we deal with the case where $\Phi$ is strictly smaller than $1$ and different from $1/\Theta$ (i.e. $\Theta\Phi\neq 1$). The inequality $\Theta\Phi\geq1$, which is necessary by condition (\ref{noCA_2a}), allows then only for values of $\Phi$ in the range $1/\Theta<\Phi<1$. Note that for $\Phi\leq 1$ the necessary condition (\ref{noCA_1}) is already implied by condition (\ref{noCA_2a}) and therefore we do not need to assume condition (\ref{noCA_1}) explicitly.

\begin{lemma}\label{caso_unico_insieme_connesso}
Assume that $\Theta>1$, $\Theta\Phi> 1$ and $\Theta\Phi^{2}\leq1$ (this implies $\Phi<1$). Then $(\rho_{1},\rho_{2})\in R_{\Theta,\Phi}$ implies $w_{1}(k)< w_{2}(k)$ for all $k\in\mathbb{R}$.
\end{lemma}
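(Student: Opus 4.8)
The plan is to show that under these hypotheses the two slices never meet, and then to upgrade ``no common point'' to the strict inequality $w_{1}(k)<w_{2}(k)$ for all $k$: because $\Theta>1$ we have $w_{1}(0)=\theta_{1}<\theta_{2}=w_{2}(0)$, and if $w_{1}(k_{1})\ge w_{2}(k_{1})$ held for some $k_{1}$, the intermediate value theorem applied to the continuous function $w_{1}-w_{2}$ on the interval between $0$ and $k_{1}$ would produce a common point. Since $\Theta\Phi^{2}\le 1$, the set $R_{\Theta,\Phi}$ equals the single stripe $S$; as the intersection of two half-planes with the open square $(-1,1)^{2}$ it is convex, hence connected, and since $\Theta\Phi>1\neq 1$ Lemma \ref{lemma_insieme_iperbola} gives $S\cap H_{\Theta,\Phi}=\emptyset$.

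The first real step is to check that the interior $\mathring S$ is contained in $Q_{\Theta,\Phi}$. On $\mathring S$ the strict inequality $(\Theta\Phi\rho_{2}-\rho_{1})^{2}<(\Theta\Phi-1)^{2}$ holds, so by Lemma \ref{lemma_existence_of_roots_Q} the polynomial $Q(x)$ has degree two with discriminant given by (\ref{discriminante}); since $\Theta\Phi^{2}-1\le 0$ forces $(\Theta-1)(\Theta\Phi^{2}-1)\le 0\le(\Theta\Phi\rho_{2}-\rho_{1})^{2}$ while $\Theta>0$, that discriminant is nonnegative and $Q$ has a real root. Hence the set $A:=S\cap Q_{\Theta,\Phi}$ satisfies $\mathring S\subseteq A\subseteq\overline{\mathring S}$, i.e.\ $A$ lies between the connected set $\mathring S$ and its closure and is therefore connected, and $A\subseteq S$ is disjoint from $H_{\Theta,\Phi}$. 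So Corollary \ref{corollario_segno_alphaZ} applies to $A$.

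Next I would pin down the sign of $\alpha Z$ on $A$ by evaluating at $(\rho_{1},\rho_{2})=(0,0)$, which belongs to $\mathring S$ precisely because $\Theta\Phi>1$. There $\alpha(x)\equiv\Theta-1>0$ and $Z(x)=(1-\Theta^{2}\Phi^{2})x^{2}+\big(1-\Theta^{2}-(\Theta-1)^{2}\big)$, whose leading coefficient is negative (as $\Theta\Phi>1$) and whose constant term is negative (as $\Theta>1$), so $Z(x)<0$ for every real $x$; tracking each real root of $Q$ as a continuous branch over $A$ as in \citet{Hendriks_Martini_2019}, Lemma \ref{lemma_degli_zeri} forces $\alpha Z$ to keep its sign at $(0,0)$ along each branch, namely negative. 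By the trichotomy (\ref{tipi_di_radice}), since $\Theta>0$ and $z_{2}(x)>0$, a root of type b) or c) would require $\alpha$ and $Z$ to have the same sign, which $\alpha Z<0$ forbids; the only other possibility, $x_{(\rho_{1},\rho_{2})}=0$, is not a crossing point either, since type c) at $x=0$ would need $\alpha(0)+\Theta z_{2}(0)=2\Theta-1$ to equal $z_{1}(0)=1$. Thus on $A$ every root of $Q$ is of type a), hence not an intersection point, and at points of $S$ outside $Q_{\Theta,\Phi}$ there is no real root of $Q$ at all. Either way the two slices have no common point, and the argument concludes as in the first paragraph.

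The crux is the connectedness of $A=S\cap Q_{\Theta,\Phi}$, which rests on the special feature of this regime: the assumption $\Theta\Phi^{2}\le 1$ makes the bracket $(\Theta\Phi\rho_{2}-\rho_{1})^{2}-(\Theta-1)(\Theta\Phi^{2}-1)$ in the discriminant automatically nonnegative, so $\mathring S$ already sits inside $Q_{\Theta,\Phi}$ and $A$ gets squeezed between $\mathring S$ and $\overline{\mathring S}$. The only points needing a little care beyond that are the bookkeeping when $Q$ has two distinct real roots (handle each branch separately, each connected and avoiding $H_{\Theta,\Phi}$) and the degenerate case $\Theta\Phi^{2}=1$, where the branches may coalesce at $x=0$ --- but $x=0$ has already been excluded as a crossing point. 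The sign computation at $(0,0)$ and the case analysis in (\ref{tipi_di_radice}) are routine.
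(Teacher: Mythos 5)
Your proposal is correct and follows essentially the same route as the paper: $R_{\Theta,\Phi}=S$ is connected and avoids $H_{\Theta,\Phi}$, roots of $Q$ exist because $\Theta\Phi^{2}\leq 1$ makes the discriminant nonnegative, the sign of $\alpha Z$ is fixed at the test point $(0,0)$ and propagated via Corollary \ref{corollario_segno_alphaZ}, so every root is of type a) and no intersection occurs. Your only deviations are refinements of detail — observing that at $(0,0)$ one has $Z(x)<0$ for all real $x$ (so the explicit root $x_{(0,0)}$ need not be computed) and being a bit more explicit about the connectedness of $S\cap Q_{\Theta,\Phi}$ and the root branches — which match the paper's argument in substance.
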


\begin{proof}
Once again, if $\Theta>1$ we must have $w_{1}(0)<w_{2}(0)$ and there exist points $k\in\mathbb{R}$ where $w_{1}(k)\geq w_{2}(k)$ if and only if intersection points exist, i.e. if and only if the polynomial $Q(x)$ has at least one real root which satisfies condition c) in (\ref{tipi_di_radice}). From Lemma \ref{lemma_existence_of_roots_Q} we know that $Q(x)$ must have roots if $\Theta>1$, $\Theta\Phi> 1$, $\Theta\Phi^{2}\leq1$ and if $(\rho_{1},\rho_{2})$ belongs to the interior of $R_{\Theta,\Phi}$. Since under the present conditions $R_{\Theta,\Phi}$ is connected and does not intersect $H_{\Theta,\Phi}$ (see Lemma \ref{lemma_insieme_iperbola}), we may apply Corollary \ref{corollario_segno_alphaZ} to check whether the roots in $R_{\Theta,\Phi}$ are of type a). This will be the case if there exists a single $(\rho_{1},\rho_{2})\in int(R_{\Theta,\Phi})$ such that $\alpha(x_{(\rho_{1},\rho_{2})}) Z(x_{(\rho_{1},\rho_{2})})< 0$. Under the present conditions the origin belongs to $int(R_{\Theta,\Phi})$. Hence we use $(\rho_{1},\rho_{2})=(0,0)$ as test point. Of course, $\alpha(x_{(0,0)})=\Theta-1>0$. Moreover, it is easy to check that 
\[x_{(0,0)}=2 \frac{\sqrt{\Theta(\Theta -1)\left(1-\Theta  \Phi ^2\right)}}{\Theta ^2 \Phi ^2-1}\]
so that 
\[Z(x_{(0,0)})=-\frac{2 (\Theta -1) \Theta  \left[\Theta^{2}\Phi^{2}-1+2(1-\Theta  \Phi ^2)\right]}{\Theta ^2 \Phi ^2-1}<0.\]
We conclude that $x_{(\rho_{1},\rho_{2})}$ must be of type a) whenever $(\rho_{1},\rho_{2})\in R_{\Theta,\Phi}$.
\end{proof}

In the previous lemma we have assumed that $\Theta\Phi^{2}\leq1$ which forces $\Phi<1$. To apply the same method of proof for the case where $\Theta\Phi^{2}>1$ we must however \textit{assume} that $\Phi< 1$.

\begin{lemma}\label{caso_due_insiemi}
Assume $\Theta>1$, $\Theta\Phi^{2}>1$ and $\Phi< 1$ (note that $\Theta>1$ and $\Theta\Phi^{2}>1$ implies $\Theta\Phi>1$). Then $(\rho_{1},\rho_{2})\in R_{\Theta,\Phi}$ implies $w_{1}(k)< w_{2}(k)$ for all $k\in\mathbb{R}$.
\end{lemma}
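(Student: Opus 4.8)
The plan is to follow the template of the proof of Lemma~\ref{caso_unico_insieme_connesso}; the one genuinely new feature is that under the present hypotheses $R_{\Theta,\Phi}$ is \emph{not} connected but splits into the two disjoint stripes $S_{1}$ and $S_{2}$ described in the main text, so the ``test point'' argument has to be carried out in each of them. Since $\Theta>1$ we have $w_{1}(0)<w_{2}(0)$, hence there is a $k$ with $w_{1}(k)\ge w_{2}(k)$ if and only if $Q(x)$ has a root of type~c) in (\ref{tipi_di_radice}); so it suffices to prove that for every $(\rho_{1},\rho_{2})\in R_{\Theta,\Phi}$ all roots of $Q(x)$ are of type~a) --- equivalently, recalling that $\Theta>0$ and $z_{2}>0$ so that $\mathrm{sign}(2\alpha\Theta z_{2})=\mathrm{sign}(\alpha)$, that $\alpha(x_{(\rho_{1},\rho_{2})})Z(x_{(\rho_{1},\rho_{2})})<0$.

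\textbf{Constant sign on each stripe.} From $\Phi<1$ and $\Theta\Phi^{2}>1$ we get $\Theta\Phi>\Theta\Phi^{2}>1$, so in particular $\Theta\Phi\neq1$; since $S_{1}\cup S_{2}\subseteq S$, Lemma~\ref{lemma_insieme_iperbola} yields $S_{i}\cap H_{\Theta,\Phi}=\emptyset$ for $i=1,2$. Each $S_{i}$ is convex, hence connected; on it one has $(\Theta-1)(\Theta\Phi^{2}-1)\le(\Theta\Phi\rho_{2}-\rho_{1})^{2}\le(\Theta\Phi-1)^{2}$, with the upper inequality strict except on the outer edge (the difference $(\Theta\Phi-1)^{2}-(\Theta-1)(\Theta\Phi^{2}-1)=\Theta(\Phi-1)^{2}$ is positive), so by Lemma~\ref{lemma_existence_of_roots_Q} $Q(x)$ is a genuine quadratic with non-negative discriminant there and $S_{i}$ (minus its outer edge) lies in $Q_{\Theta,\Phi}$. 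Corollary~\ref{corollario_segno_alphaZ} then shows that $(\rho_{1},\rho_{2})\mapsto\alpha(x_{(\rho_{1},\rho_{2})})Z(x_{(\rho_{1},\rho_{2})})$ keeps a constant sign on each $S_{i}$, and that sign is nonzero by Lemma~\ref{lemma_degli_zeri} since $S_{i}$ avoids $H_{\Theta,\Phi}$. Hence it is enough to exhibit one test point in each stripe where this product is negative.

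\textbf{A single test point suffices.} The substitution $(\rho_{1},\rho_{2})\mapsto(-\rho_{1},-\rho_{2})$ swaps $S_{1}$ and $S_{2}$ and, together with $x\mapsto-x$, leaves $Q$, $\alpha$ and $Z$ unchanged, so it preserves the type of the roots; thus it suffices to treat $S_{2}$. There I would take the point on the inner edge with $\rho_{1}=0$, i.e.\ $\rho_{2}=\sqrt{(\Theta-1)(\Theta\Phi^{2}-1)}/(\Theta\Phi)$, which lies in $(0,1)$ because $(\Theta-1)(\Theta\Phi^{2}-1)<\Theta^{2}\Phi^{2}$ is equivalent to $\Theta(1+\Phi^{2})>1$. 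At this point the discriminant $D$ in (\ref{discriminante}) vanishes, so $Q(x)$ has a single double root $x_{\ast}=-b/(2a)$ with $a,b$ the coefficients of (\ref{Q_polynomial}); I would then evaluate $\alpha(x_{\ast})=\Theta-1+\Theta\Phi\rho_{2}\,x_{\ast}$ and $Z(x_{\ast})$ explicitly and verify $\alpha(x_{\ast})Z(x_{\ast})<0$ directly, exactly as was done with the point $(0,0)$ in Lemma~\ref{caso_unico_insieme_connesso}. Working at the double root also sidesteps the fact that $Q$ has two distinct roots on the interior of $S_{i}$. Once this is checked, every root of $Q(x)$ on $R_{\Theta,\Phi}$ is of type~a), hence not an intersection point; combined with $w_{1}(0)<w_{2}(0)$ and continuity this gives $w_{1}(k)<w_{2}(k)$ for all $k\in\mathbb{R}$.

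\textbf{Expected main obstacle.} The crux will be the bare-hands sign check in the last step: $x_{\ast}$ and $Z(x_{\ast})$ are built out of the heavy polynomial (\ref{Q_polynomial}), and one must confirm $\alpha(x_{\ast})Z(x_{\ast})<0$ uniformly over the whole admissible region $\Theta>1$, $\Theta\Phi^{2}>1$, $\Phi<1$. Secondary technicalities are making the ``constant sign on $S_{i}$'' step airtight when $Q$ has two distinct roots on the interior of $S_{i}$ (either handle the two continuous branches separately, or pass to the $D=0$ edge where they coincide as above), and confirming the chosen test point really belongs to $Q_{\Theta,\Phi}$. Should the computation instead reveal roots of type~b) rather than~a), the conclusion still follows, but one then finishes in the same spirit by distinguishing b) from c) through a test point for the quantity $\alpha(x_{(\rho_{1},\rho_{2})})+\Theta z_{2}(x_{(\rho_{1},\rho_{2})})$, as in Lemma~\ref{lemma_tipo_radice_bc}.
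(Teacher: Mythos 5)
Your proposal is correct and follows essentially the same route as the paper's proof: the paper likewise treats the two stripes $S_{1}$ and $S_{2}$ separately via Lemma \ref{lemma_insieme_iperbola} and Corollary \ref{corollario_segno_alphaZ}, and uses exactly your test points $\rho_{1}=0$, $\rho_{2}=\pm\sqrt{(\Theta-1)(\Theta\Phi^{2}-1)}/(\Theta\Phi)$, at which the discriminant vanishes and $Q$ has a single root. The explicit evaluation you defer as the ``main obstacle'' is carried out there and confirms your expectation, giving $\alpha\left(x_{(0,\rho_{2}^{\pm})}\right)=\frac{(\Theta-1)(\Theta\Phi^{2}+\Theta-2)}{\Theta(1-\Phi^{2})}>0$ (this is where $\Phi<1$ is used) and $Z\left(x_{(0,\rho_{2}^{\pm})}\right)<0$, so the roots are of type a) and your fallback via Lemma \ref{lemma_tipo_radice_bc} is not needed.
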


\begin{proof}
The proof is similar to the proof of the previous lemma. However, in the present case we must deal with the fact that the set $R_{\Theta, \Phi}$ is not connected but only the union of the two connected sets $S_{1}$ and $S_{2}$. In each one of these two sets we must therefore find a point $(\rho_{1},\rho_{2})$ such that $\alpha(x_{(\rho_{1},\rho_{2})})$ and $Z(x_{(\rho_{1},\rho_{2})})$ are of opposite sign. To locate these points, note that the $\rho_{2}$-axis intersects both sets and hence we choose the $(\rho_{1},\rho_{2})$-points  with $\rho_{1}=0$ and 
\[\rho_{2}=\rho_{2}^{\pm}:=\pm\frac{\sqrt{(\Theta-1)(\Theta\Phi^{2}-1)}}{\Theta\Phi}.\]
This choice is convenient because it makes the discriminant of $Q(x)$ vanish. According to the sign in $\rho_{2}^{\pm}$, it gives rise to the roots
\begin{equation}\label{roots_rho_pm}
x_{(0,\rho_{2}^{\pm})}=\pm\frac{2 \sqrt{(\Theta -1) \left(\Theta  \Phi ^2-1\right)}}{\Theta(1 -\Phi ^2)}
\end{equation}
which, regardless of the sign, yields
\begin{equation}\label{alpha_rho_pm}
\alpha\left(x_{(0,\rho_{2}^{\pm})}\right)=\frac{(\Theta-1)(\Theta\Phi^{2}+\Theta-2)}{\Theta(1-\Phi^{2})}
\end{equation}
and
\begin{equation}\label{Z_rho_pm}
Z\left(x_{(0,\rho_{2}^{\pm})}\right)=-\frac{2 (\Theta -1) \left(\Theta  \Phi ^2+\Theta -2\right)^2}{\Theta  \left(\Phi ^2-1\right)^2}.
\end{equation}
Note that $Z\left(x_{(0,\rho_{2}^{\pm})}\right)<0$ regardless of the value of $\Phi$ (provided that $\Phi\neq1$), but to make sure that $\alpha\left(x_{(0,\rho_{2}^{\pm})}\right)>0$ we need to assume $\Phi<1$.
 
\end{proof}

Lemma \ref{casi_particolari}, Lemma \ref{caso_unico_insieme_connesso} and Lemma \ref{caso_due_insiemi} show that the necessary condition (\ref{noCA_2a}) along with $\Theta>1$ and $\Phi\leq 1$ are jointly sufficient to rule out calendar spread arbitrage. The next lemma deals with the condition
\begin{equation}\label{second_sufficient_condition}
(\Theta-1)(\Theta\Phi^{2}-1)\geq(\Theta\Phi\rho_{2}-\rho_{1})^{2}
\end{equation}
which allows for values of $\Phi$ larger than $1$.

\begin{lemma}\label{lemma_second_sufficient_condition}
Assume that $\Theta>1$ and that condition (\ref{second_sufficient_condition}) holds (note that these conditions jointly imply the necessary condition (\ref{noCA_2a})). Then it follows that $w_{1}(k)\leq w_{2}(k)$ for all $k\in\mathbb{R}$. Under the assumptions of this lemma there exist tangency points (i.e. values of $k$ where $w_{1}(k)=w_{2}(k)$) if and only if $\Phi>1$ and condition (\ref{second_sufficient_condition}) holds with equality sign. In that case there must exist exactly one tangency point.
\end{lemma}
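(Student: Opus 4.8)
The plan is to study the quartic $P(x)=x^2Q(x)$ exactly as in the preceding lemmas: common points of the two slices are the roots of $P$ of type c) in (\ref{tipi_di_radice}); since $\Theta>1$ we have $w_1(0)=\theta_1<\theta_2=w_2(0)$, so $x=0$ is never such a root, and everything reduces to the quadratic $Q(x)$. I would first extract the easy consequences of the hypotheses. From $(\Theta-1)(\Theta\Phi^2-1)\ge(\Theta\Phi\rho_2-\rho_1)^2\ge0$ and $\Theta>1$ one gets $\Theta\Phi^2\ge1$, hence (treating $\Phi\ge1$ and $0<\Phi<1$ separately) $\Theta\Phi>1$; in particular $(\rho_1,\rho_2)\notin H_{\Theta,\Phi}$, because a point of $H_{\Theta,\Phi}$ obeying the necessary condition (\ref{noCA_2a}) must have $\Theta\Phi=1$. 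Using the identity $(\Theta-1)(\Theta\Phi^2-1)-(\Theta\Phi-1)^2=-\Theta(1-\Phi)^2\le0$, condition (\ref{second_sufficient_condition}) yields $(\Theta\Phi\rho_2-\rho_1)^2\le(\Theta\Phi-1)^2<(\Theta\Phi+1)^2$, so the coefficient of $x^2$ in $Q(x)$ is $\le0$ and it vanishes precisely when $\Phi=1$ and (\ref{second_sufficient_condition}) holds with equality. By Lemma \ref{lemma_existence_of_roots_Q}, whenever $Q$ has degree two its discriminant $D$ has the sign of $(\Theta\Phi\rho_2-\rho_1)^2-(\Theta-1)(\Theta\Phi^2-1)$.

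The inequality $w_1\le w_2$ is then obtained by cases. If (\ref{second_sufficient_condition}) is strict, $Q$ has degree two and $D<0$, so $Q$ has no real root, the slices have no common point, and $w_1(0)<w_2(0)$ forces $w_1(k)<w_2(k)$ for all $k$. If (\ref{second_sufficient_condition}) holds with equality and $\Phi=1$, then $(\rho_1,\rho_2)\in R_{\Theta,\Phi}$ and Lemma \ref{casi_particolari} gives $w_1(k)<w_2(k)$; if it holds with equality and $0<\Phi<1$, then $(\rho_1,\rho_2)\in R_{\Theta,\Phi}$ and Lemma \ref{caso_unico_insieme_connesso} (if $\Theta\Phi^2=1$) or Lemma \ref{caso_due_insiemi} (if $\Theta\Phi^2>1$) gives $w_1(k)<w_2(k)$. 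Finally, suppose (\ref{second_sufficient_condition}) holds with equality and $\Phi>1$. The set $\bar R:=\{(\rho_1,\rho_2)\in(-1,1)^2:(\Theta\Phi\rho_2-\rho_1)^2\le(\Theta-1)(\Theta\Phi^2-1)\}$ is convex and contains the origin in its interior; on its interior the previous case gives $w_1(k)<w_2(k)$ for all $k$, and since $w_1(k)$ and $w_2(k)$ depend continuously on $(\rho_1,\rho_2)$, letting $(\rho_1,\rho_2)$ run along a segment from the origin to a point of $\partial\bar R\cap(-1,1)^2$ yields $w_1(k)\le w_2(k)$ there as well. This proves the first assertion, and the strict inequalities in all the other cases already show that tangency points can occur only if $\Phi>1$ and (\ref{second_sufficient_condition}) holds with equality.

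It remains to handle that case and count the contact points. Here $Q$ has degree two, $D=0$, and $Q$ has a double root $x_0=x_0(\rho_1,\rho_2)$ depending continuously on $(\rho_1,\rho_2)$, so $P$ has roots $0$ and $x_0$, each double; since $x=0$ is not a common point, the slices touch only at $k_0=x_0/\varphi_1$, and they do so iff $x_0$ is of type c). The set $\partial\bar R\cap(-1,1)^2$ is the union of the two segments on the lines $\Theta\Phi\rho_2-\rho_1=\pm\sqrt{(\Theta-1)(\Theta\Phi^2-1)}$; each segment is connected, is contained in $Q_{\Theta,\Phi}$, and does not meet $H_{\Theta,\Phi}$, so by Lemma \ref{lemma_degli_zeri} the function $\alpha(x_0)Z(x_0)$ is nowhere zero on it and by Corollary \ref{corollario_segno_alphaZ} it has constant sign; once that sign is positive, Lemma \ref{lemma_tipo_radice_bc} shows that $\alpha(x_0)+\Theta z_2(x_0)$ also has constant sign there. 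I would then evaluate at the test point $\rho_1=0$, $\rho_2=\pm\sqrt{(\Theta-1)(\Theta\Phi^2-1)}/(\Theta\Phi)$, which lies on the corresponding segment and makes $D$ vanish: by the identities (\ref{roots_rho_pm}), (\ref{alpha_rho_pm}) and (\ref{Z_rho_pm}) one has $\alpha(x_0)<0$ and $Z(x_0)<0$ (this is where $\Phi>1$ enters), hence $\alpha(x_0)Z(x_0)>0$, so $x_0$ is not of type a); and a short computation giving $z_2(x_0)=(\Theta\Phi^2+\Theta-2)/(\Theta(\Phi^2-1))$ yields $\alpha(x_0)+\Theta z_2(x_0)=(\Theta\Phi^2+\Theta-2)/(\Theta(\Phi^2-1))>0$, so $x_0$ is of type c) at the test point, and therefore along the whole segment. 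Hence the slices touch at $k_0$ and nowhere else; together with $w_1\le w_2$ everywhere, this gives exactly one tangency point.

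I expect the last case ($\Phi>1$ with equality) to be the only real obstacle: one must rule out that the double root $x_0$ of $Q$ is a spurious root introduced by the two squarings (type a) or b)) rather than a true contact (type c)), and this must hold uniformly over the whole boundary $\partial\bar R$. The device is the connectedness/continuity argument above, which reduces the verification to the single test point $\rho_1=0$, $\rho_2=\rho_2^{\pm}$; the only genuinely computational ingredient is the evaluation of $z_1(x_0)$, $z_2(x_0)$, $\alpha(x_0)$ there, which is routine (and a little surprising, since $z_1(x_0)=z_2(x_0)$ at that point).
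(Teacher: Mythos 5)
Your proposal is correct, and its core is the same machinery as the paper's proof: reduce to the quadratic $Q(x)$, exclude $H_{\Theta,\Phi}$ via Lemma \ref{lemma_insieme_iperbola} (your remark that a point of $H_{\Theta,\Phi}$ compatible with (\ref{noCA_2a}) forces $\Theta\Phi=1$ is exactly that lemma), read off existence of roots from the discriminant in Lemma \ref{lemma_existence_of_roots_Q}, dispatch $\Phi\leq 1$ through Lemmas \ref{casi_particolari}--\ref{caso_due_insiemi}, and classify the boundary root with Lemma \ref{lemma_degli_zeri}, Corollary \ref{corollario_segno_alphaZ} and Lemma \ref{lemma_tipo_radice_bc}, using the very same test points $(0,\rho_{2}^{\pm})$ and the computations (\ref{roots_rho_pm}), (\ref{alpha_rho_pm}), (\ref{Z_rho_pm}) together with $z_{2}=(\Theta\Phi^{2}+\Theta-2)/(\Theta(\Phi^{2}-1))$. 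The one genuinely different step is how you conclude that the unique common point in the case $\Phi>1$ with equality in (\ref{second_sufficient_condition}) is a tangency rather than a crossing: the paper argues directly that a crossing is impossible because both asymptotes of $w_{2}$ are strictly steeper than those of $w_{1}$, so a single crossing would force a second intersection, contradicting the uniqueness of the root of $Q$; you instead first obtain $w_{1}\leq w_{2}$ on the boundary by a continuity-in-$(\rho_{1},\rho_{2})$ limiting argument from the interior of the stripe $\bar R$ (where the strict-inequality case gives $w_{1}<w_{2}$), after which the unique common point is automatically a tangency. Both routes are sound; yours is in the same spirit as the perturbation-in-$\Theta$ device the paper itself uses in Lemmas \ref{lemma_two_intersection_points} and \ref{lemma_one_intersection_point}, and it has the small advantage of delivering the weak inequality $w_{1}\leq w_{2}$ before any root-type classification, while the paper's asymptote argument is more self-contained in that it never varies the parameters. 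Your more granular bookkeeping (sign of the leading coefficient of $Q$, the identity $(\Theta-1)(\Theta\Phi^{2}-1)-(\Theta\Phi-1)^{2}=-\Theta(\Phi-1)^{2}$, the explicit $\Phi=1$ degenerate subcase) is a harmless refinement of the same argument.
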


\begin{proof}
If $\Theta>1$ and condition (\ref{second_sufficient_condition}) holds, we must have $\Theta\Phi^{2}\geq 1$ and hence $\Theta\Phi>1$ (otherwise there would not exist any $(\rho_{1},\rho_{2})$-pair for which (\ref{second_sufficient_condition}) holds). Consider first what happens when $\Theta\Phi^{2}=1$. In this case we must have $\Phi<1$ and for $\Phi\leq1$ we have already proved that $w_{1}(k)<w_{2}(k)$ for all $k\in\mathbb{R}$. 

Consider now what happens when $\Phi>1$. Since we are assuming that $\Theta>1$ (and hence we must have $\Theta\Phi^{2}>1$), we must have $R_{\Theta,\Phi}=S_{1}\cup S_{2}$ and the assumed inequality (\ref{second_sufficient_condition}) is satisfied if and only if the $(\rho_{1},\rho_{2})$-pair belongs to the area between the two stripes $S_{1}$ and $S_{2}$ or to one of the inner boundaries of $S_{1}$ or $S_{2}$. We indicate this set of $(\rho_{1},\rho_{2})$-pairs with $S_{3}$. Note that $S_{3}$ must be a proper subset of $S$ since we are assuming that $\Phi>1$. Since $\Phi>1$ implies $\Theta\Phi>1$, we can apply Lemma \ref{lemma_insieme_iperbola} and conclude that $S_{3}\cap H_{\Theta,\Phi}\subset S\cap H_{\Theta,\Phi}=\emptyset$. Now it follows from Lemma \ref{lemma_existence_of_roots_Q} that $Q(x)$ has no real roots when $(\rho_{1},\rho_{2})$ belongs to the interior of $S_{3}$, i.e. if the inequality (\ref{second_sufficient_condition}) is strict. Since the two slices can intersect only if $Q(x)$ has real roots, we conclude that $w_{1}(k)< w_{2}(k)$ for all $k\in\mathbb{R}$ in this case. On the other hand, if the $(\rho_{1},\rho_{2})$-pair belongs to the boundary of $S_{3}$, then it must also belong to the inner boundary of one of the two stripes $S_{1}$ or $S_{2}$. In other words, there must be equality in (\ref{second_sufficient_condition}) which means that 
%\[\rho_{1}=\rho_{1}^{\pm}(\rho_{2}):=\Theta\Phi\rho_{2}\pm\sqrt{(\Theta-1)(\Theta\Phi^{2}-1)}\]
%or, equivalently, that
\[\rho_{2}=\rho_{2}^{\pm}(\rho_{1}):=\frac{1}{\Theta\Phi}\left(\rho_{1}\pm \sqrt{(\Theta-1)(\Theta\Phi^{2}-1)}\right)\]
and that the discriminant of $Q(x)$ must be zero (see Lemma \ref{lemma_existence_of_roots_Q}). Hence $Q(x)$ must have a root and this root must be unique. As usual we write $x_{(\rho_{1}, \rho_{2})}$ to indicate the root. Since we are assuming that $\Theta$ and $\Phi$ are both larger than $1$, we can apply Lemma \ref{lemma_degli_zeri} and conclude that $\alpha(x_{(\rho_{1}, \rho_{2})})Z(x_{(\rho_{1}, \rho_{2})})\neq 0$ when the $(\rho_{1}, \rho_{2})$-pair belongs to $R_{\Theta,\Phi}=S_{1}\cup S_{2}$ and hence that $\alpha(x_{(\rho_{1}, \rho_{2})})Z(x_{(\rho_{1}, \rho_{2})})\neq 0$ for all $(\rho_{1}, \rho_{2})$-pairs which belong to the boundary of $S_{3}$ where a root $x_{(\rho_{1},\rho_{2})}$ must exist and must be unique. Since $S_{1}$ and $S_{2}$ are two disjoint and connected sets, $\alpha(x_{(\rho_{1}, \rho_{2})})Z(x_{(\rho_{1}, \rho_{2})})$ does not change sign on each of these two sets. We will now show that the sign of $\alpha(x_{(\rho_{1}, \rho_{2})})Z(x_{(\rho_{1}, \rho_{2})})$ is positive on both sets. This can be done by proving that the sign of $\alpha(x_{(\rho_{1}, \rho_{2})})Z(x_{(\rho_{1}, \rho_{2})})$ is positive at a single point in each of the two sets (see Corollary \ref{corollario_segno_alphaZ}). As in the proof of Lemma (\ref{caso_due_insiemi}) we use the $(\rho_{1},\rho_{2})$-pairs with $\rho_{1}=0$ and $\rho_{2}=\rho_{2}^{\pm}:=\pm\frac{1}{\Theta\Phi}\sqrt{(\Theta-1)(\Theta\Phi^{2}-1)}$ as test points (note that these points also belong to the boundary of $S_{3}$). With this choice we still get the expressions in (\ref{roots_rho_pm}), (\ref{alpha_rho_pm}) and (\ref{Z_rho_pm}) for $x_{(0, \rho_{2}^{\pm})}$, $\alpha(x_{(0, \rho_{2}^{\pm})})$ and $Z(x_{(0, \rho_{2}^{\pm})})$. However, since we are now assuming that $\Phi>1$, we see that $\alpha(x_{(0,\rho_{2}^{\pm})})<0$ and not $\alpha(x_{(0, \rho_{2}^{\pm})})>0$ as in the proof of Lemma \ref{caso_due_insiemi} (of course, $Z(x_{(0, \rho_{2}^{\pm})})$ remains still negative). We conclude that the roots we are considering now must be either of type b) or c) in (\ref{tipi_di_radice}). Hence we must have $\alpha(x_{(\rho_{1}, \rho_{2})})+\Theta z_{2}(x_{(\rho_{1}, \rho_{2})})=\pm z_{1}(x_{(\rho_{1}, \rho_{2})})$. In order to prove that the roots correspond to intersection points, we first note that the mapping $(\rho_{1},\rho_{2})\mapsto \alpha(x_{(\rho_{1}, \rho_{2})})+\Theta z_{2}(x_{(\rho_{1}, \rho_{2})})$ does not change sign on each of the two sets $S_{1}$ and $S_{2}$ (use Lemma \ref{lemma_tipo_radice_bc}). However, as far as we know by now, the sign of $\alpha(x_{(\rho_{1}, \rho_{2})})+\Theta z_{2}(x_{(\rho_{1}, \rho_{2})})$ might be different according to whether $(\rho_{1},\rho_{2})$ belongs to $S_{1}$ or to $S_{2}$. Thus, if there exists a single point $(\rho_{1},\rho_{2})\in S_{i}$ such that the sign of $\alpha(x_{(\rho_{1}, \rho_{2})})+\Theta z_{2}(x_{(\rho_{1}, \rho_{2})})$ is positive, we can conclude that $x_{(\rho_{1}, \rho_{2})}$ is of type c) and hence that $w_{1}(x_{(\rho_{1}, \rho_{2})}/\varphi_{1})=w_{2}(x_{(\rho_{1}, \rho_{2})}/\varphi_{1})$ for every $(\rho_{1}, \rho_{2})\in S_{i}$ ($i=1,2$). Again, we use the two $(\rho_{1},\rho_{2})$-pairs with $\rho_{1}=0$ and $\rho_{2}=\rho_{2}^{\pm}:=\pm\frac{1}{\Theta\Phi}\sqrt{(\Theta-1)(\Theta\Phi^{2}-1)}$ as test points. For these points we get the same expressions of $x_{(0,\rho_{2}^{\pm})}$ and $\alpha(x_{(0,\rho_{2}^{\pm}})$ as in equations (\ref{roots_rho_pm}) and (\ref{alpha_rho_pm}), while for $z_{2}(x_{(0,\rho_{2}^{\pm})})$ we get the expression
\[z_{2}(x_{(0,\rho_{2}^{\pm})})=\frac{\Theta\Phi^{2}+\Theta-2}{\Theta(\Phi^{2}-1)}.\]
Hence we conclude that
\[\alpha(x_{(0,\rho_{2}^{\pm})})+\Theta z_{2}(x_{(0,\rho_{2}^{\pm})})=\frac{\Theta\Phi^{2}+\Theta-2}{\Theta(\Phi^{2}-1)}>0.\]
This argument shows that every root $x_{(\rho_{1}, \rho_{2})}$ with $(\rho_{1}, \rho_{2})\in S_{1}\cup S_{2}$ is an intersection point and that there must be a unique intersection point when $(\rho_{1}, \rho_{2})\in(S_{1}\cup S_{2})\cap S_{3}$. It is not difficult to show that in the latter case the intersection point must be a tangency point. In fact, if it was a crossing point, there should exist one further crossing point because under our present assumptions the left and right asymptotes of $w_{2}(k)$ are both steeper than those of $w_{1}(k)$ (recall that we are assuming $\Theta>1$ and $\Phi>1$: on $(S_{1}\cup S_{2})\cap S_{3}$ condition (\ref{noCA_2a}) must therefore hold with strict inequality sign). \end{proof}

%\begin{remark}
%In the statements of Lemma \ref{casi_particolari}, Lemma \ref{caso_unico_insieme_connesso}, Lemma \ref{caso_due_insiemi} and Lemma \ref{lemma_second_sufficient_condition} we always assumed that $\Theta>1$ but we never mentioned the necessary condition (\ref{noCA_1}) because the latter is already implied by the other conditions.
%\end{remark}

As far as I know, the results in the next two lemmas are new.

\begin{lemma}\label{lemma_two_intersection_points}
If 
\begin{equation}\label{condizione_two_intersection_points}
\Theta>1, \quad \Phi>1,\quad\text{ and }\quad(\Theta-1)(\Theta\Phi^{2}-1)<(\Theta\Phi\rho_{2}-\rho_{1})^{2}<(\Theta\Phi-1)^{2},
\end{equation}
there must exist exactly two points where the slices $w_{1}(k)$ and $w_{2}(k)$ cross over each other.
\end{lemma}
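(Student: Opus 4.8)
The plan is to run the polynomial analysis of Lemmas~\ref{lemma_existence_of_roots_Q}--\ref{lemma_second_sufficient_condition} once more, this time with $\Phi>1$ and with $(\rho_{1},\rho_{2})$ in the \emph{interior} of one of the stripes $S_{1},S_{2}$ rather than on an inner boundary. First I would record that $\Theta>1$ and $\Phi>1$ force $\Theta\Phi>1$ and $\Theta\Phi^{2}>1$, so $R_{\Theta,\Phi}=S_{1}\cup S_{2}$, and (\ref{condizione_two_intersection_points}) places $(\rho_{1},\rho_{2})$ strictly between the inner and the outer boundary of $S_{1}$ or of $S_{2}$; in particular $(\rho_{1},\rho_{2})\in S$, so Lemma~\ref{lemma_insieme_iperbola} (which applies since $\Theta\Phi\neq1$) gives $(\rho_{1},\rho_{2})\notin H_{\Theta,\Phi}$. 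Because $(\Theta\Phi\rho_{2}-\rho_{1})^{2}<(\Theta\Phi-1)^{2}$, Lemma~\ref{lemma_existence_of_roots_Q} says $Q(x)$ is genuinely of second degree with discriminant $D$ as in (\ref{discriminante}); the squared factor in $D$ is nonzero because $(\rho_{1},\rho_{2})\notin H_{\Theta,\Phi}$, and the bracket $(\Theta\Phi\rho_{2}-\rho_{1})^{2}-(\Theta-1)(\Theta\Phi^{2}-1)$ is strictly positive by the left inequality in (\ref{condizione_two_intersection_points}); hence $D>0$ and $Q$ has two distinct real roots $x_{-}<x_{+}$.

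Next I would show that both roots are of type~c) in (\ref{tipi_di_radice}), i.e. genuine intersection points, by the sign-tracking method already used in the proofs of Lemma~\ref{caso_due_insiemi} and Lemma~\ref{lemma_second_sufficient_condition}. Since $S_{1}$ and $S_{2}$ are connected and, by Lemma~\ref{lemma_insieme_iperbola}, disjoint from $H_{\Theta,\Phi}$, Corollary~\ref{corollario_segno_alphaZ} forces $\alpha(x_{(\rho_{1},\rho_{2})})Z(x_{(\rho_{1},\rho_{2})})$ to keep a constant sign on each stripe; evaluating at the test points $(\rho_{1},\rho_{2})=(0,\rho_{2}^{\pm})$, $\rho_{2}^{\pm}=\pm\sqrt{(\Theta-1)(\Theta\Phi^{2}-1)}/(\Theta\Phi)$, on the inner boundaries and reading off (\ref{alpha_rho_pm}) and (\ref{Z_rho_pm}), the hypothesis $\Phi>1$ makes $\alpha<0$ while $Z<0$, so $\alpha Z>0$ throughout $S_{1}\cup S_{2}$ and roots of type~a) are excluded. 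To exclude type~b) I would invoke Lemma~\ref{lemma_tipo_radice_bc}: $\alpha+\Theta z_{2}$ keeps a constant sign on each stripe, and at the same test points it equals $(\Theta\Phi^{2}+\Theta-2)/(\Theta(\Phi^{2}-1))>0$ (as computed in the proof of Lemma~\ref{lemma_second_sufficient_condition}), which rules out $\alpha+\Theta z_{2}=-z_{1}$. Hence $x_{-}$ and $x_{+}$ both satisfy $\alpha+\Theta z_{2}=z_{1}$. Moreover $x=0$, although a root of $P(x)$, is not an intersection point because $w_{1}(0)=\theta_{1}<\theta_{2}=w_{2}(0)$, and $Q(0)\neq0$ (otherwise $x=0$ would be a root of $Q$ lying in $S_{1}\cup S_{2}$, contradicting $\alpha(0)Z(0)=-2\Theta(\Theta-1)^{2}<0$ against $\alpha Z>0$ there; in particular $x_{\pm}\neq0$). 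Since every intersection point is a root of $P(x)=x^{2}Q(x)$, the two slices meet in exactly the two points $k_{\pm}=x_{\pm}/\varphi_{1}$.

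The last and genuinely new step is to upgrade these two meeting points to \emph{transversal} crossings. For this I would use the factorization
\[
P(x)=\bigl(\alpha+\Theta z_{2}-z_{1}\bigr)\bigl(\alpha+\Theta z_{2}+z_{1}\bigr)\bigl(z_{1}-\alpha+\Theta z_{2}\bigr)\bigl(z_{1}+\alpha-\Theta z_{2}\bigr),
\]
obtained by writing $P=(2\alpha\Theta z_{2})^{2}-(z_{1}^{2}-\alpha^{2}-\Theta^{2}z_{2}^{2})^{2}$ as a difference of squares twice, together with the identity $\tfrac{2}{\theta_{1}}(w_{2}(k)-w_{1}(k))=\alpha(x)+\Theta z_{2}(x)-z_{1}(x)$ with $x=\varphi_{1}k$. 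At an intersection point $x^{*}\in\{x_{-},x_{+}\}$ the first factor vanishes, while the remaining three equal $2z_{1}$, $2\Theta z_{2}$ and $2\alpha$ at $x^{*}$ and are all nonzero there, since $z_{1},z_{2}>0$ everywhere and $\alpha(x^{*})\neq0$ by the previous step. Hence, locally, $P(x)=\bigl(\alpha+\Theta z_{2}-z_{1}\bigr)(x)\,h(x)$ with $h$ real-analytic and $h(x^{*})\neq0$, so $\operatorname{ord}_{x^{*}}\bigl(\alpha+\Theta z_{2}-z_{1}\bigr)=\operatorname{ord}_{x^{*}}P$. But $P=x^{2}Q$ with $Q$ quadratic having the two distinct nonzero roots $x_{-},x_{+}$, so $x^{*}$ is a simple root of $P$; therefore $\alpha+\Theta z_{2}-z_{1}$, hence $w_{2}-w_{1}$, vanishes to order exactly $1$ at $x^{*}$ and changes sign there. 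Thus each of $k_{-},k_{+}$ is a genuine crossing point, and there are exactly two of them.

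I expect Steps~1 and~2 to be essentially mechanical — they are the mirror image for $\Phi>1$ of computations already carried out in Lemma~\ref{caso_due_insiemi} and Lemma~\ref{lemma_second_sufficient_condition} — the one subtlety being that $x_{(\rho_{1},\rho_{2})}$ is two-valued on the stripe interiors and collapses to a single double root on the inner boundaries used as test points, so the sign statements from Corollary~\ref{corollario_segno_alphaZ} and Lemma~\ref{lemma_tipo_radice_bc} must be read branch by branch. The real obstacle is Step~3: converting ``$Q$ has two distinct roots of type~c)'' into ``two transversal crossings'' rather than possible tangencies. This is handled by the explicit factorization of $P$ and by the fact that its three spurious factors cannot vanish at an intersection point — which is exactly where $\alpha(x^{*})\neq0$, and hence the hypothesis $\Phi>1$ via Lemma~\ref{lemma_degli_zeri}, is indispensable: without it one could not rule out a double root of $P$ away from the origin, and hence could not exclude a tangency.
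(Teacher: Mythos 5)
Your proposal is correct, and its first two steps coincide with the paper's own argument: two distinct roots of $Q$ exist by Lemma \ref{lemma_existence_of_roots_Q} combined with Lemma \ref{lemma_insieme_iperbola}, and both are identified as genuine intersection points by the sign-tracking of $\alpha Z$ and of $\alpha+\Theta z_{2}$ over $S_{1}\cup S_{2}$ already carried out in the proof of Lemma \ref{lemma_second_sufficient_condition}. Where you genuinely diverge is the final step, excluding tangencies. The paper argues globally: since both asymptotes of $w_{2}$ are strictly steeper than those of $w_{1}$, the two intersection points are either both tangencies or both crossings, and the tangency case is ruled out by a perturbation in $\Theta$ (a small increase of $\theta_{2}$ raises $w_{2}$ pointwise, so tangencies would disappear, yet the strict inequalities in (\ref{condizione_two_intersection_points}) survive the perturbation and still force two intersection points). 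You instead argue locally and algebraically: from the factorization $P=(\alpha+\Theta z_{2}-z_{1})(\alpha+\Theta z_{2}+z_{1})(z_{1}-\alpha+\Theta z_{2})(z_{1}+\alpha-\Theta z_{2})$, at an intersection point the last three factors equal $2z_{1}$, $2\Theta z_{2}$ and $2\alpha\neq 0$ (the nonvanishing of $\alpha$ coming from your sign step), so $w_{2}-w_{1}=\tfrac{\theta_{1}}{2}\left(\alpha+\Theta z_{2}-z_{1}\right)$ vanishes to the same order as $P=x^{2}Q$, namely to order one at a simple nonzero root of $Q$, and therefore changes sign. Both routes are sound; yours is more self-contained and quantitative (transversality follows directly from simplicity of the roots, and the same device would recover the tangency statement of Lemma \ref{lemma_second_sufficient_condition} from the double root on the inner boundary), while the paper's is shorter but relies on the informal monotonicity-in-$\Theta$ plus open-condition argument. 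Your branch-by-branch reading of the sign lemmas at the boundary test points is at least as careful as the paper's own use of Corollary \ref{corollario_segno_alphaZ}.
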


\begin{proof}
From Lemma \ref{lemma_existence_of_roots_Q} and Lemma \ref{lemma_insieme_iperbola} we know that under condition (\ref{condizione_two_intersection_points}) there must exist two roots $x_{(\rho_{1},\rho_{2})}$. Moreover, from the proof of Lemma \ref{lemma_second_sufficient_condition} we know that both these roots must be intersection points. The chained inequality in (\ref{condizione_two_intersection_points}) says that both asymptotes of $w_{2}(k)$ are steeper than those of $w_{1}(k)$. Therefore only two cases can occur: either (i) both intersection points are tangency points, or (ii) both intersection points are crossing points. 
It is not difficult to see that case (i) is impossibile. In fact, if both intersection points were tangency points, any increase of $\Theta$ should lead to $w_{1}(k)<w_{2}(k)$ for all $k\in\mathbb{R}$. However, given fixed values of $\Phi$, $\rho_{1}$ and $\rho_{2}$, a small enough increase of $\Theta$ does not lead to a violation of condition (\ref{condizione_two_intersection_points}) which implies the existence of two intersection points.
\end{proof}

Now, it remains to see what happens when
\begin{equation}\label{condizione_one_intersection_point}
\Theta>1, \quad \Phi>1,\quad\text{ and }\quad (\Theta\Phi\rho_{2}-\rho_{1})^{2}=(\Theta\Phi-1)^{2}
\end{equation}
i.e. when the left or right asymptote of $w_{2}(k)$ is the same as the corresponding asymptote of $w_{1}(k)$.

\begin{lemma}\label{lemma_one_intersection_point}
Assume condition (\ref{condizione_one_intersection_point}) holds. Then there must exist exactly one point where the slices $w_{1}(k)$ and $w_{2}(k)$ cross over each other.
\end{lemma}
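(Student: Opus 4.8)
The plan is to combine the degree information on the factor $Q(x)$ from Lemma \ref{lemma_existence_of_roots_Q} with a comparison of the affine asymptotes of $2w_{1}$ and $2w_{2}$.

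First I would reduce to one case by the reflection $k\mapsto -k$, which sends each slice $w_{i}$ to the eSSVI slice obtained by changing the sign of $\rho_{i}$, leaves $\Theta$, $\Phi$ and condition (\ref{condizione_one_intersection_point}) unchanged, and carries crossing points bijectively to crossing points. The equality $(\Theta\Phi\rho_{2}-\rho_{1})^{2}=(\Theta\Phi-1)^{2}$ amounts to one of the two relations $\Theta\Phi(1-\rho_{2})=1-\rho_{1}$ or $\Theta\Phi(1+\rho_{2})=1+\rho_{1}$, and the reflection swaps them; so it is enough to treat the second, in which the \emph{right} asymptotes of $2w_{1}$ and $2w_{2}$ have the same slope. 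Since $\Theta>1$ and $\Phi>1$ give $\Theta\Phi>1$, this relation yields $1+\rho_{1}=\Theta\Phi(1+\rho_{2})>1+\rho_{2}$, hence $\rho_{1}>\rho_{2}$.

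Next I would use the two tails. Expanding the square roots, $2w_{i}(k)=\theta_{i}\varphi_{i}(1+\rho_{i})k+\theta_{i}(1+\rho_{i})+o(1)$ as $k\to+\infty$ and $2w_{i}(k)=-\theta_{i}\varphi_{i}(1-\rho_{i})k+\theta_{i}(1-\rho_{i})+o(1)$ as $k\to-\infty$. As $k\to+\infty$ the slopes agree by assumption and the intercepts satisfy $\theta_{2}(1+\rho_{2})/\bigl(\theta_{1}(1+\rho_{1})\bigr)=\Theta/(\Theta\Phi)=1/\Phi<1$, so $w_{2}(k)<w_{1}(k)$ for all large $k$. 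As $k\to-\infty$, the inequality $\rho_{1}>\rho_{2}$ together with $\Theta\Phi=(1+\rho_{1})/(1+\rho_{2})$ gives $\theta_{2}\varphi_{2}(1-\rho_{2})>\theta_{1}\varphi_{1}(1-\rho_{1})$ after a one-line computation, so the left asymptote of $2w_{2}$ is strictly steeper and $w_{2}(k)>w_{1}(k)$ for all sufficiently negative $k$. Hence $w_{2}-w_{1}$ changes sign on $\mathbb{R}$; the slices cross at least once, and every point where they meet lies in a bounded $k$-interval.

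Finally I would bound the number of meeting points. Condition (\ref{condizione_one_intersection_point}) gives $(\Theta\Phi\rho_{2}-\rho_{1})^{2}=(\Theta\Phi-1)^{2}$, so by Lemma \ref{lemma_existence_of_roots_Q} the coefficient of $x^{2}$ in $Q(x)$ vanishes, i.e. $\deg Q\le 1$ and hence $\deg P\le 3$ for $P(x)=x^{2}Q(x)$; moreover $P\not\equiv 0$, since otherwise (factoring $P$ as a product of the four continuous functions $z_{1}\mp(\alpha\pm\Theta z_{2})$) one of them would vanish on an interval, forcing a polynomial identity which, evaluated at $x=0$, would give either $\Theta=1$ or $w_{1}\equiv w_{2}$, both impossible when $\Theta>1$. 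Thus $P$ has at most one root besides the double root $x=0$, and since $x=0$ is not a meeting point when $\Theta>1$, the two slices meet at most once. Together with the previous paragraph this forces exactly one meeting point, and — being the sole zero of the continuous function $w_{2}-w_{1}$, which changes sign — it is a genuine crossing point. I expect the main obstacle to be the bookkeeping of the two sign choices in (\ref{condizione_one_intersection_point}) and the check that $k\mapsto-k$ genuinely reduces one to the other, together with the minor verification that $P\not\equiv 0$; after that, the argument is just a comparison of two pairs of straight lines. (Alternatively, in the style of the earlier lemmas, one could locate the unique root of $Q$ and, using Corollary \ref{corollario_segno_alphaZ} and Lemma \ref{lemma_tipo_radice_bc} at the test points $\rho_{1}=0$, $\rho_{2}=\pm(\Theta\Phi-1)/(\Theta\Phi)$, verify directly that it is of type c).)
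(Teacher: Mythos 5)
Your proposal is correct in outline but takes a genuinely different route from the paper. The paper's proof specializes $Q(x)$ on the boundary set where (\ref{condizione_one_intersection_point}) holds, computes its unique root explicitly, invokes Lemma \ref{lemma_insieme_iperbola}, Corollary \ref{corollario_segno_alphaZ} and Lemma \ref{lemma_tipo_radice_bc} at the test points $\rho_{1}=0$, $\rho_{2}=\pm\bigl(\frac{1}{\Theta\Phi}-1\bigr)$ to show this root is of type c) in (\ref{tipi_di_radice}), and finally rules out tangency by the $\Theta$-perturbation argument of Lemma \ref{lemma_two_intersection_points}. You instead obtain the crossing directly: the reflection $k\mapsto-k$ correctly reduces to the case $\Theta\Phi(1+\rho_{2})=1+\rho_{1}$, your tail expansions and the intercept/slope comparisons are right (so $w_{2}-w_{1}$ is positive near $-\infty$ and negative near $+\infty$), and the vanishing of the $x^{2}$-coefficient of $Q$ from Lemma \ref{lemma_existence_of_roots_Q} bounds the number of meeting points by one, since $x=0$ is a double root of $P$ but not a meeting point when $\Theta>1$. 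This is more elementary: it avoids the root-type classification machinery and the perturbation step altogether; what it gives up is the explicit location of the crossing point, which the paper's computation of $Q^{\pm}$ provides.

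The one sub-step whose justification as written would fail is the claim $P\not\equiv 0$. Evaluating at $x=0$ disposes of three of the four factors of $P=\bigl[(\alpha+\Theta z_{2})^{2}-z_{1}^{2}\bigr]\bigl[z_{1}^{2}-(\alpha-\Theta z_{2})^{2}\bigr]$, but the factor $z_{1}+\alpha-\Theta z_{2}$ equals $1+(\Theta-1)-\Theta=0$ at $x=0$ for every admissible parameter choice, so its identical vanishing is not excluded by that evaluation. The gap is easy to close. Note that $P\equiv 0$ is equivalent to $Q\equiv 0$, and the constant term of $Q$ in (\ref{Q_polynomial}) equals $-4\Theta(\Theta-1)\bigl[\Theta\Phi^{2}(1-\rho_{2}^{2})-(1-\rho_{1}^{2})\bigr]$; under (\ref{condizione_one_intersection_point}) it cannot vanish, because dividing $\Theta\Phi^{2}(1-\rho_{2}^{2})=1-\rho_{1}^{2}$ by the case equality $\Theta\Phi(1+\rho_{2})=1+\rho_{1}$ would give $\Phi(1-\rho_{2})=1-\rho_{1}$, and with $\Phi>1$ and $\Theta\Phi>1$ these two relations force $\rho_{2}>\rho_{1}$ and $\rho_{1}>\rho_{2}$ simultaneously. (Alternatively, cite the paper's explicit formula for $Q^{\pm}$ on this boundary, whose $x$-coefficient $8\Theta^{2}\Phi(\rho_{2}\pm1)(\Phi-1)(\Theta\Phi-1)$ is nonzero when $\Phi>1$, $\Theta\Phi>1$ and $|\rho_{2}|<1$.) With that patch your argument is complete and correct.
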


\begin{proof}
Define $B_{\Theta,\Phi}$ as the subset of the $(\rho_{1},\rho_{2})$-plane where the equality in condition (\ref{condizione_one_intersection_point}) holds. $B_{\Theta,\Phi}$ is then the boundary of $S$, i.e. the subset of the $(\rho_{1},\rho_{2})$-plane where 
\[\rho_{1}=\rho_{1}^{+}(\rho_{2}):=\Theta\Phi\rho_{2}+\Theta\Phi-1\quad\text{ and }\quad-1<\rho_{2}<\min\left\{\frac{2}{\Theta\Phi}-1,1\right\}\]
or
\[\rho_{1}=\rho_{1}^{-}(\rho_{2}):=\Theta\Phi\rho_{2}+1-\Theta\Phi\quad\text{ and }\quad \max\left\{1-\frac{2}{\Theta\Phi},-1\right\}<\rho_{2}<1.\]
On $B_{\Theta,\Phi}$ the polynomial $Q(x)$ reduces to
\begin{equation*}
Q^{\pm}(x)=-4 \Theta ^2 \left(\rho _2\pm 1\right) \Phi  \left[(\Theta -1)^2 \Phi\rho _2 \pm (\Theta -1) (\Theta  \Phi +\Phi -2)-2 x (\Phi -1) (\Theta  \Phi -1)\right]
\end{equation*}
and the only root of $Q^{\pm}(x)$ is given by
\[x_{(\rho_{1}^{\pm}(\rho_{2}), \rho_{2})}=\frac{(\Theta -1)^2 \Phi\rho _2 \pm (\Theta -1) (\Theta  \Phi +\Phi -2)}{2(\Phi -1) (\Theta  \Phi -1)}.\]
We will show that this root must be a crossing point. To this aim note that $B_{\Theta,\Phi}$ is the union of two disjoint connected sets which we denote with $B_{\Theta,\Phi}^{\pm}$. From Lemma \ref{lemma_insieme_iperbola} it follows that $B_{\Theta,\Phi}\cap H_{\Theta,\Phi}=\emptyset$. Hence we may apply Corollary \ref{corollario_segno_alphaZ} and conclude that $\alpha(x_{(\rho_{1}^{\pm},\rho_{2})}) Z(x_{(\rho_{1}^{\pm},\rho_{2})})$ does not change sign on each one of the two connected components of $B_{\Theta,\Phi}=B_{\Theta,\Phi}^{+}\cup B_{\Theta,\Phi}^{-}$. In order to show that $\alpha(x_{(\rho_{1}^{\pm},\rho_{2})})$ and $Z(x_{(\rho_{1}^{\pm},\rho_{2})})$ are of the same sign, it is therefore sufficient to find a single point $(\rho_{1},\rho_{2})$ in each of the two sets $B_{\Theta,\Phi}^{+}$ and $B_{\Theta,\Phi}^{-}$ for which $\alpha(x_{(\rho_{1}^{\pm},\rho_{2})})$ and $Z(x_{(\rho_{1}^{\pm},\rho_{2})})$ are of the same sign. As test points we choose the points $(\rho_{1},\rho_{2})=(\rho_{1}^{\pm}(\rho_{2}), \rho_{2})$ where $\rho_{1}^{\pm}(\rho_{2})=0$. It is easily seen that these points are $(0,\pm\rho_{*})$ where $\rho_{*}=\frac{1}{\Theta\Phi} - 1$. Substituting in the formula for the root we get
\[x_{(0, \pm\rho_{*})}=\pm\frac{2 \Theta ^2 \Phi -\Theta ^2-2 \Theta  \Phi +1}{2 \Theta  (\Phi -1) (\Theta  \Phi -1)}.\]
Regardless of the sign in $\rho_{2}=\rho_{*}$, this root yields
\[\alpha(x_{(0, \pm\rho_{*})})=-\frac{(\Theta-1)^2}{2 \Theta  (\Phi - 1)}\]
and 
\[Z(x_{(0, \pm\rho_{*})})=-\frac{(\Theta -1)^2 (\Theta ^2 \Phi +2 \Theta  \Phi ^2-4 \Theta  \Phi -\Phi +2)}{2 \Theta  (\Phi -1)^2 (\Theta  \Phi -1)}.\]
Of course $\alpha(x_{(0, \pm\rho_{*})})<0$. As for $Z(x_{(0, \pm\rho_{*})})$, its sign depends on the sign of
\[\Theta ^2 \Phi +2 \Theta  \Phi ^2-4 \Theta  \Phi -\Phi +2\]
which is positive whenever $\Theta>1$ and $\Phi>1$ (we omit the details of the proof of this assertion). Hence also $Z(x_{(0, \pm\rho_{*})})<0$ and thus we conclude that $x_{(0, \pm\rho_{*})}$ is a root of either type b) or c) in (\ref{tipi_di_radice}). To find out which type applies, we must determine the sign of $\alpha(x_{(0, \pm\rho_{*})})+\Theta z_{2}(x_{(0, \pm\rho_{*})})$. It is not difficult to verify that
\[z_{2}(x_{(0, \pm\rho_{*})})=\frac{\Theta ^2 \Phi +2 \Theta  \Phi ^2-4 \Theta  \Phi -\Phi +2}{2 \Theta  (\Phi -1) (\Theta  \Phi -1)}\]
and hence we get
\[\alpha(x_{(0, \pm\rho_{*})})+\Theta z_{2}(x_{(0, \pm\rho_{*})})=\frac{2 \Theta ^2 \Phi ^2-2 \Theta ^2 \Phi +\Theta ^2-2 \Theta  \Phi +1}{2 \Theta  (\Phi -1) (\Theta  \Phi -1)}.\]
The numerator in this expression can be written as
\[\Theta^{2}(\Phi-1)^{2}+(\Theta\Phi-1)^{2}\]
and therefore we must have $\alpha(x_{(0, \pm\rho_{*})})+\Theta z_{2}(x_{(0, \pm\rho_{*})})>0$. By Lemma \ref{lemma_tipo_radice_bc} we conclude that $x_{(\rho_{1}, \rho_{2})}$ must be an intersection point for every $(\rho_{1}, \rho_{2})\in B_{\Theta,\Phi}$. 

To complete the proof it remains to show that for every $(\rho_{1}, \rho_{2})\in B_{\Theta,\Phi}$ the corresponding root $x_{(\rho_{1}, \rho_{2})}$ is a crossing point. To this aim we apply the argument in the proof of Lemma \ref{lemma_two_intersection_points} once again: if $x_{(\rho_{1}, \rho_{2})}$ was a tangency point, then by increasing $\Theta$ a little bit we should have no intersection points at all. However, if we increase $\Theta$ a little bit while leaving $\Phi$, $\rho_{1}$ and $\rho_{2}$ unchanged, we pass from condition (\ref{condizione_one_intersection_point}) to condition (\ref{condizione_two_intersection_points}) which implies the existence of two crossing points.

\end{proof}

Combining the statements in Lemma \ref{lemma_necessary_conditions}, Lemma \ref{lemma_caso_THETA1}, Lemma \ref{casi_particolari}, Lemma \ref{caso_unico_insieme_connesso}, Lemma \ref{caso_due_insiemi}, Lemma \ref{lemma_second_sufficient_condition}, Lemma \ref{lemma_two_intersection_points} and Lemma \ref{lemma_one_intersection_point} yields a corrected and sharper version of Proposition 3.1 in \citet{Hendriks_Martini_2019}. The little corrections concern 
\begin{itemize}
\item[i) ] the special case where $\Theta=1$ and
\item[ii) ] the fact that Proposition 3.1 in \citet{Hendriks_Martini_2019} seems to imply that with $\Theta>1$, $(\Theta\Phi\rho_{2}-\rho_{1})^{2}\leq(\Theta\Phi-1)^{2}$ and $\Phi< 1$ it would be possible to have absence of calendar spread arbitrage even if $\Theta\Phi <1$ which goes against the necessary condition (\ref{noCA_2a}). However, the preprint version of the article contains a slightly different version of the proposition which is not subject to this problem but where the two necessary conditions are a little too strong due to strong inequality signs instead of weak ones (see Proposition 3.5 in \citet{Hendriks_Martini_2017}). 
\end{itemize}

The sharper (and corrected) statement of the Hendriks-Martini proposition is given below. To make it more concise, the necessary condition (\ref{noCA_2}) will be stated as in (\ref{condizione_necessaria_2}).

\begin{proposition}\label{corrected_proposition}
Assume that $\theta_{1}$ and $\varphi_{1}$ are both strictly positive and let $\Theta:=\theta_{2}/\theta_{1}$ and $\Phi:=\varphi_{2}/\varphi_{1}$. Then, there is absence of calendar spread arbitrage (i.e. $w_{1}(k)\leq w_{2}(k)$ for all $k\in\mathbb{R}$) only if $\Theta\geq 1$ and
\begin{equation}\label{condizione_necessaria_2} 
1-\Theta\Phi\leq\Theta\Phi\rho_{2}-\rho_{1}\leq\Theta\Phi-1
\end{equation}

Moreover, 
\begin{itemize}
\item when $\Theta=1$ there is absence of calendar spread arbitrage if and only if either (i) $\rho_{1}=\rho_{2}=0$ and $\Phi\geq 1$ or (ii) $\Phi=\rho_{1}/\rho_{2}$ and $\rho_{1}^{2}\geq\rho_{2}^{2}$;
\item when $\Theta>1$ there is absence of calendar spread arbitrage if and only if condition (\ref{condizione_necessaria_2}) holds jointly with
\[\Phi\leq 1\quad\text{ or }\quad (\Theta\Phi\rho_{2}-\rho_{1})^{2}\leq(\Theta-1)(\Theta\Phi^{2}-1);\]
\item when $\Theta>1$ and condition (\ref{condizione_necessaria_2}) holds jointly with
\[\Phi\leq 1\quad\text{ or }\quad (\Theta\Phi\rho_{2}-\rho_{1})^{2}<(\Theta-1)(\Theta\Phi^{2}-1)\]
there are no intersection points (i.e. $w_{1}(k)< w_{2}(k)$ for all $k\in\mathbb{R}$)
\item when $\Theta>1$, $\Phi>1$ and $(\Theta\Phi\rho_{2}-\rho_{1})^{2}=(\Theta-1)(\Theta\Phi^{2}-1)$ the two slices have exactly one intersection point which is a tangency point;
\item when $\Theta>1$, $\Phi>1$ and $(\Theta-1)(\Theta\Phi^{2}-1)<(\Theta\Phi\rho_{2}-\rho_{1})^{2}<(\Theta\Phi-1)^{2}$ there must exist exactly two points where the slices $w_{1}(k)$ and $w_{2}(k)$ cross over each other.
\item when $\Theta>1$, $\Phi>1$ and $(\Theta\Phi\rho_{2}-\rho_{1})^{2}=(\Theta\Phi-1)^{2}$ there must exist exactly one point where the slices $w_{1}(k)$ and $w_{2}(k)$ cross over each other.

\end{itemize}

\end{proposition}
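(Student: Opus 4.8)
The plan is to assemble the preceding lemmas, organised according to whether $\Theta=1$, or $\Theta>1$ with $\Phi\le 1$, or $\Theta>1$ with $\Phi>1$. First I would dispatch the necessary condition: by Lemma~\ref{lemma_necessary_conditions}, absence of calendar spread arbitrage forces both (\ref{noCA_1}) and (\ref{noCA_2}); since the equivalences (\ref{noCA_2})$\Leftrightarrow$(\ref{noCA_2a})$\Leftrightarrow$(\ref{condizione_necessaria_2}) are purely algebraic --- note that the chained inequality (\ref{condizione_necessaria_2}) already forces $\Theta\Phi\ge1$ --- and since $\Theta\ge1$ is the weakest consequence of (\ref{noCA_1}), the opening claim follows at once.

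Next I would observe that the case $\Theta=1$ is nothing but Lemma~\ref{lemma_caso_THETA1}. For $\Theta>1$ the heart of the matter is the remark that the region carved out by (\ref{condizione_necessaria_2}) decomposes as $R_{\Theta,\Phi}\cup\{(\Theta\Phi\rho_{2}-\rho_{1})^{2}\le(\Theta-1)(\Theta\Phi^{2}-1)\}$, which is immediate from the explicit description of $R_{\Theta,\Phi}$ in the two cases $\Theta\Phi^{2}\le1$ and $\Theta\Phi^{2}>1$. On the second piece Lemma~\ref{lemma_second_sufficient_condition} gives $w_{1}\le w_{2}$ with a unique tangency point exactly when $\Phi>1$ and equality holds (hence strict separation when $\Phi\le1$ or the inequality is strict), while on $R_{\Theta,\Phi}$ with $\Phi\le1$ the combination of Lemma~\ref{casi_particolari} (boundary cases $\Phi=1$, $\Theta\Phi=1$), Lemma~\ref{caso_unico_insieme_connesso} ($\Theta\Phi^{2}\le1$) and Lemma~\ref{caso_due_insiemi} ($\Theta\Phi^{2}>1$, $\Phi<1$) yields $w_{1}<w_{2}$. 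Taken together these cover all the claimed sufficient configurations and, as a by-product, settle the ``no intersection points'' and ``exactly one tangency point'' bullets.

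For the converse when $\Theta>1$ I would argue by contradiction: if there is absence of calendar spread arbitrage then (\ref{condizione_necessaria_2}) holds by the first step, so were the stated sufficient condition to fail we would have $\Phi>1$ together with $(\Theta-1)(\Theta\Phi^{2}-1)<(\Theta\Phi\rho_{2}-\rho_{1})^{2}\le(\Theta\Phi-1)^{2}$; then Lemma~\ref{lemma_two_intersection_points} (strict upper inequality) or Lemma~\ref{lemma_one_intersection_point} (equality) furnishes a crossing point, contradicting absence of arbitrage. This shows the condition is also necessary, and the last two bullets of the proposition are precisely the conclusions of those two lemmas.

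The one place that demands care --- though it is bookkeeping rather than new mathematics --- is checking that the case split for $\Phi\le1$ under (\ref{condizione_necessaria_2}) is genuinely exhaustive: one must note that $\Theta\Phi<1$ is excluded by (\ref{condizione_necessaria_2}), that the boundary values $\Phi=1$ and $\Theta\Phi=1$ are absorbed by Lemma~\ref{casi_particolari}, and that the slice of (\ref{condizione_necessaria_2}) lying strictly between the two stripes when $\Theta\Phi^{2}>1$ and $\Phi<1$ is exactly the set on which Lemma~\ref{lemma_second_sufficient_condition} operates. No new estimate is required; the work is entirely in tracking which lemma governs which sub-region of the $(\rho_{1},\rho_{2})$-square.
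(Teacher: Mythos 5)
Your proposal is correct and follows essentially the same route as the paper, whose own proof of Proposition \ref{corrected_proposition} is precisely the assembly of Lemmas \ref{lemma_necessary_conditions}, \ref{lemma_caso_THETA1}, \ref{casi_particolari}, \ref{caso_unico_insieme_connesso}, \ref{caso_due_insiemi}, \ref{lemma_second_sufficient_condition}, \ref{lemma_two_intersection_points} and \ref{lemma_one_intersection_point} according to the same case split ($\Theta=1$; $\Theta>1$ with $\Phi\leq1$ on $R_{\Theta,\Phi}$; the region $(\Theta\Phi\rho_{2}-\rho_{1})^{2}\leq(\Theta-1)(\Theta\Phi^{2}-1)$; and the crossing cases for $\Phi>1$). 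Your bookkeeping of the sub-regions, including the observation that (\ref{condizione_necessaria_2}) forces $\Theta\Phi\geq1$, matches the paper's argument.
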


%\begin{figure}[h!]
%\centering
%\begin{subfigure}[b]{0.45\textwidth}
%\centering
%\includegraphics[width=\textwidth]{grafico_matematica_2.eps}
%\caption{$\Theta\Phi=1$ ($\Rightarrow$ $\Theta\Phi^{2}<1$)}
%\label{sample-figure_part_a}
%\end{subfigure}
%\hfill
%\begin{subfigure}[b]{0.45\textwidth}
%\centering
%\includegraphics[width=\textwidth]{grafico_matematica_3.eps}
%\caption{$\Theta\Phi>1$ and $\Theta\Phi^{2}<1$}
%\label{sample-figure_part_b}
%\end{subfigure}\\
%
%\bigskip
%
%\begin{subfigure}[b]{0.45\textwidth}
%\centering
%\includegraphics[width=\textwidth]{grafico_matematica_4.eps}
%\caption{$\Theta\Phi>1$ and $\Theta\Phi^{2}=1$}
%\label{sample-figure_part_c}
%\end{subfigure}
%\hfill
%\begin{subfigure}[b]{0.45\textwidth}
%\centering
%\includegraphics[width=\textwidth]{grafico_matematica_5.eps}
%\caption{$\Theta\Phi>1$ and $\Theta\Phi^{2}>1$}
%\label{sample-figure_part_d}
%
%\end{subfigure}\\
%
%\bigskip
%
%\begin{subfigure}[b]{0.45\textwidth}
%\centering
%\includegraphics[width=\textwidth]{grafico_matematica_6.eps}
%\caption{$\Theta\Phi>1$ and $\Phi=1$}
%\label{sample-figure_part_e}
%\end{subfigure}\\
%
%\caption{The red areas show all possible shapes of the $R_{\Theta, \Phi}$ when $\Theta>1$ and $\Theta\Phi\geq 1$. The black line is the graph of the hyperbola $H_{\Theta,\Phi}$. The two dashed lines are the asymptotes $\rho_{1}=\pm\Theta\Phi\rho_{2}$ of $H_{\Theta, \Phi}$.}
%\label{Figura_insieme_ammissibile}
%\end{figure}

\section*{Figures}

\begin{figure}[h!]
\centering
\begin{subfigure}[b]{0.45\textwidth}
\includegraphics[width=\textwidth]{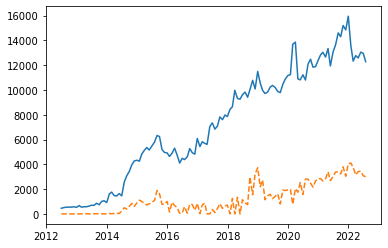}
\caption{Total number of available SPXW options before (continuous) and after (dashed) the filtering procedure.}
\end{subfigure}
\hfill
\begin{subfigure}[b]{0.45\textwidth}
\includegraphics[width=\textwidth]{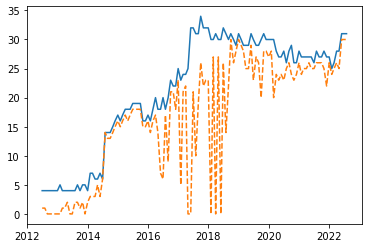}
\caption{Total number of available maturity dates before (continuous) and after (dashed) the filtering procedure.}
\end{subfigure}
\caption{The graphs show the total number of available options and available maturity dates in each option chain before and after filtering.}
\label{Figura_serie_storiche_opzioni_disponibili}
\end{figure}

\begin{figure}[h!]
\centering
\begin{subfigure}[b]{0.45\textwidth}
\centering
\includegraphics[width=\textwidth]{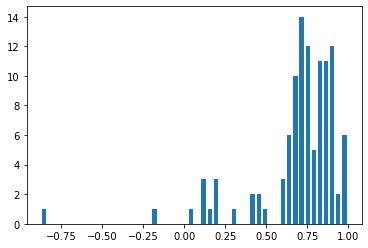}
\caption{Correlation coefficients vega versus bid-ask-spread.}
\label{istogramma_coeff_correlazione_vega_bidaskspread}
\end{subfigure}
\hfill
\begin{subfigure}[b]{0.45\textwidth}
\centering
\includegraphics[width=\textwidth]{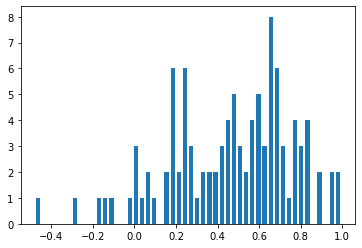}
\caption{Correlation coefficients DTE versus bid-ask-spread}
\label{istogramma_coeff_correlazione_DTE_bidaskspread}
\end{subfigure}
\caption{Histograms based on $108$ correlation coefficients (one coefficient for each option chain).} 
\label{correlations}
\end{figure}

\begin{figure}[h!]
\centering
\begin{subfigure}[b]{\textwidth}
\includegraphics[width=0.45\textwidth,height=!]{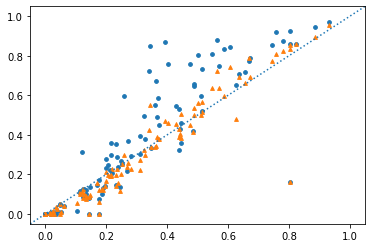}
\hfill
\includegraphics[width=0.45\textwidth,height=!]{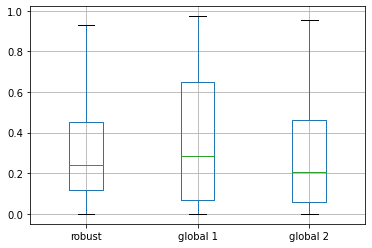}
\caption{$F_{1}$ measure of fit.}
\end{subfigure}

\bigskip

\begin{subfigure}[b]{\textwidth}
\includegraphics[width=0.45\textwidth,height=!]{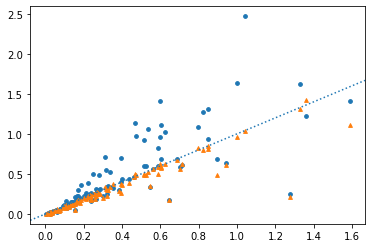}
\hfill
\includegraphics[width=0.45\textwidth,height=!]{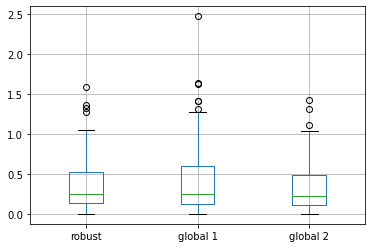}
\caption{$F_{2}$ measure of fit.}
\end{subfigure}

\bigskip

\begin{subfigure}[b]{\textwidth}
\includegraphics[width=0.45\textwidth,height=!]{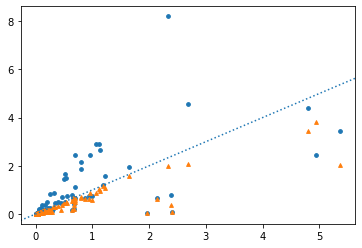}
\hfill
\includegraphics[width=0.45\textwidth,height=!]{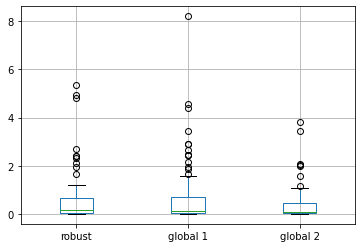}
\caption{$F_{3}$ measure of fit.}
\end{subfigure}
\end{figure}

\begin{figure}\ContinuedFloat
\begin{subfigure}[b]{\textwidth}
\includegraphics[width=0.45\textwidth,height=!]{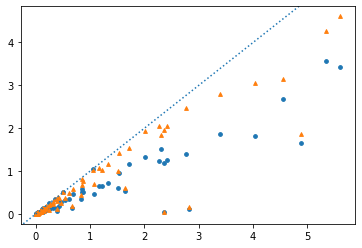}
\hfill
\includegraphics[width=0.45\textwidth,height=!]{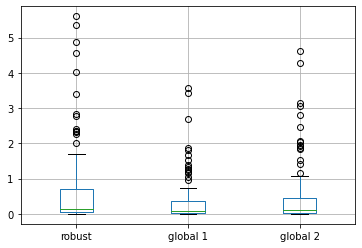}
\caption{$F_{4}$ measure of fit.}
\end{subfigure}
\caption{Scatterplots and boxplots for the measures of fit obtained on the $108$ calibrated eSSVI surfaces. \textit{"robust"} refers to the algorithm of \citet{Corbetta_Cohort_Laachir_Martini_2019} as described in Section \ref{algoritmo_robusto}. \textit{"global 1"} and \textit{"global 2"} refer to the algorithm of \citet{Mingone_2022} with inverse quadratic vega weights and with constant weights, respectively (see Section \ref{algoritmo_globale}). The abscissa values in the scatterplots are the $F_{i}$ values for the "robust" algorithm; the ordinate values are the $F_{i}$ values for the other two algorithms: circles refer to "global 1", triangles to "global 2".}
\label{Figura_measures_fit}
\end{figure}

\begin{figure}[h!]
\centering
\begin{subfigure}[b]{0.45\textwidth}
\includegraphics[width=\textwidth]{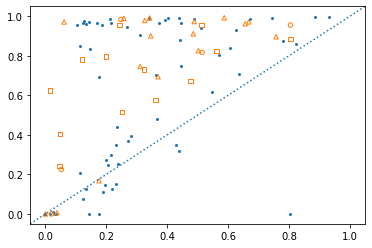}
\caption{Squared Inverse vega weights}
\end{subfigure}
\hfill
\begin{subfigure}[b]{0.45\textwidth}
\includegraphics[width=\textwidth]{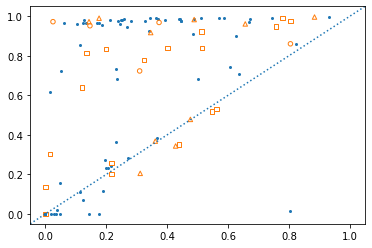}
\caption{Constant weights}
\end{subfigure}
\caption{Scatterplots of the $F_{1}$ measure of fit for the global algorithm. The $x$-axes refer to the implementation with initial values from robust algorithm, the $y$-axes to the implementation with the less data-driven initial values given by $\rho_{i}=0$, $a_{i}=\max\{\theta^{*}_{i}-\theta^{*}_{i-1}, 0\}$ and $c_{i}=0.5$ for $i=1,2,\dots, n$. Dots refer to cases where both choices for the initial values led to convergence in less than 500 objective function evaluations. Triangles refer to cases where the robust initial values led to convergence in less than 500 objective function evaluations, but the less data-driven initial values did not. Squares refer to cases where the robust initial values did not lead to convergence in less than 500 objective function evaluations, but the less data-driven initial values did. Circles refer to cases where none of the two choices for the initial values led to convergence in less than 500 objective function evaluations.}
\label{Figure_test_initial_values}
\end{figure}

\begin{figure}[h!]
\centering
\begin{subfigure}[b]{0.45\textwidth}
\centering
\includegraphics[width=\textwidth]{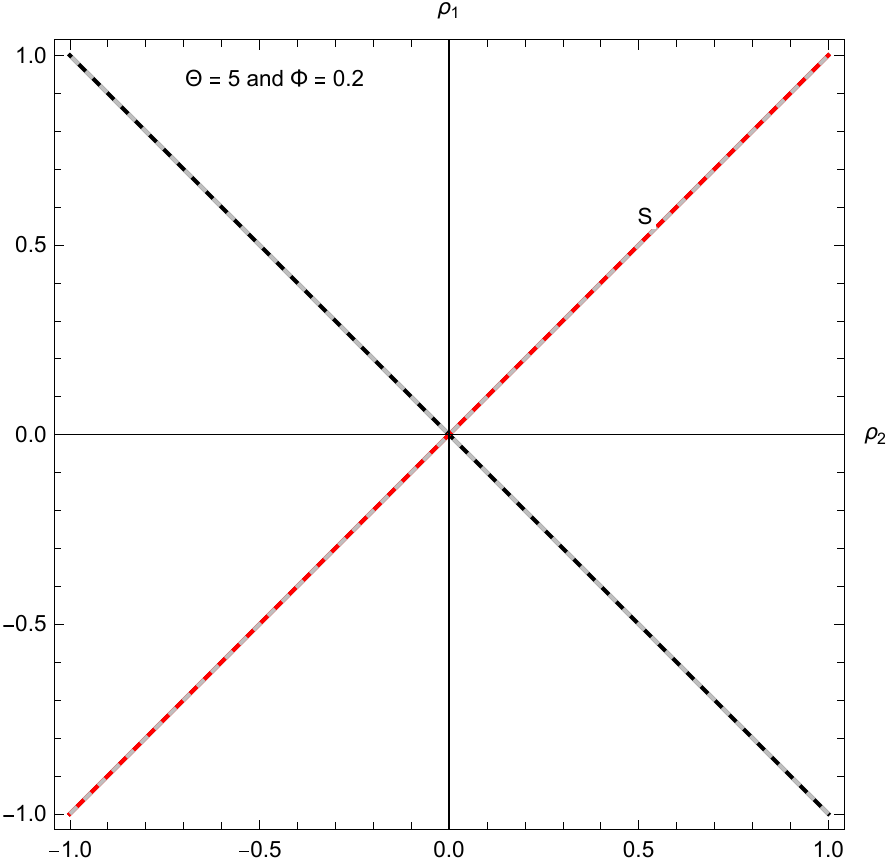}
\caption{$\Theta\Phi=1$ ($\Rightarrow$ $\Theta\Phi^{2}<1$)}
\label{sample-figure_part_a}
\end{subfigure}
\hfill
\begin{subfigure}[b]{0.45\textwidth}
\centering
\includegraphics[width=\textwidth]{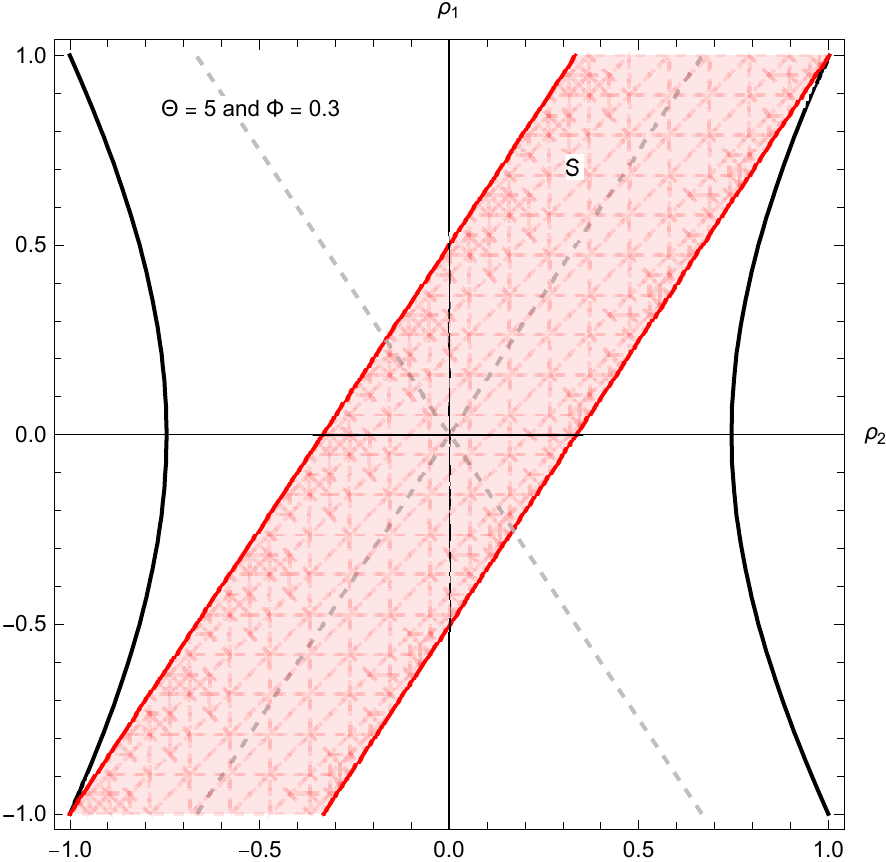}
\caption{$\Theta\Phi>1$ and $\Theta\Phi^{2}<1$}
\label{sample-figure_part_b}
\end{subfigure}\\

\bigskip

\begin{subfigure}[b]{0.45\textwidth}
\centering
\includegraphics[width=\textwidth]{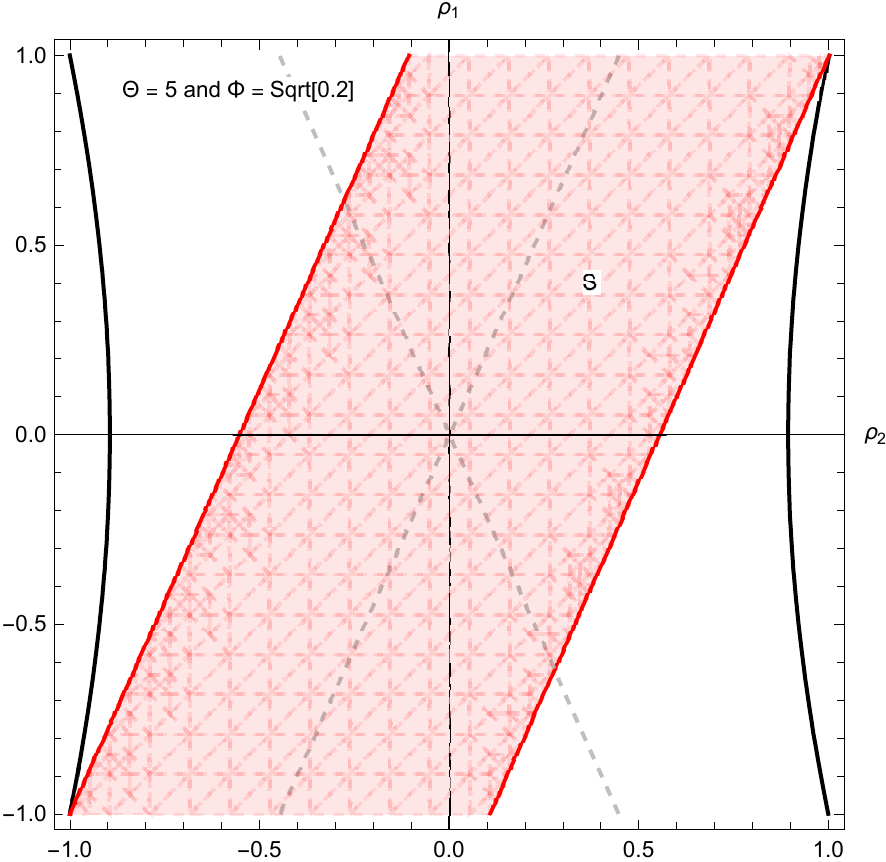}
\caption{$\Theta\Phi>1$ and $\Theta\Phi^{2}=1$}
\label{sample-figure_part_c}
\end{subfigure}
\hfill
\begin{subfigure}[b]{0.45\textwidth}
\centering
\includegraphics[width=\textwidth]{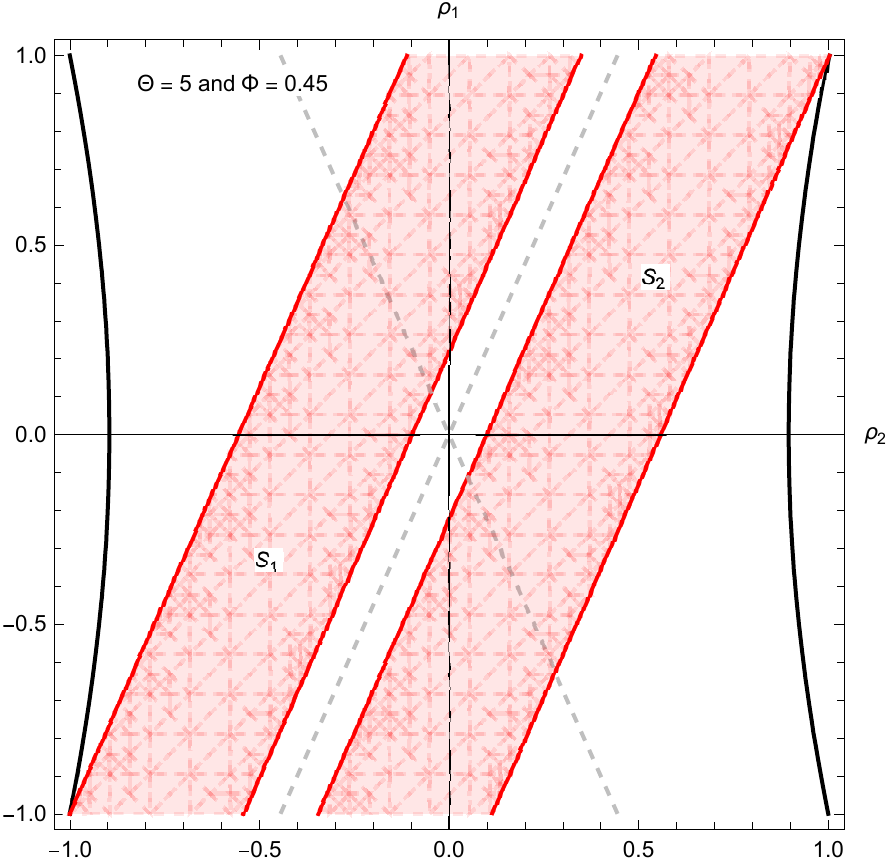}
\caption{$\Theta\Phi>1$ and $\Theta\Phi^{2}>1$}
\label{sample-figure_part_d}

\end{subfigure}\\

\bigskip

\begin{subfigure}[b]{0.45\textwidth}
\centering
\includegraphics[width=\textwidth]{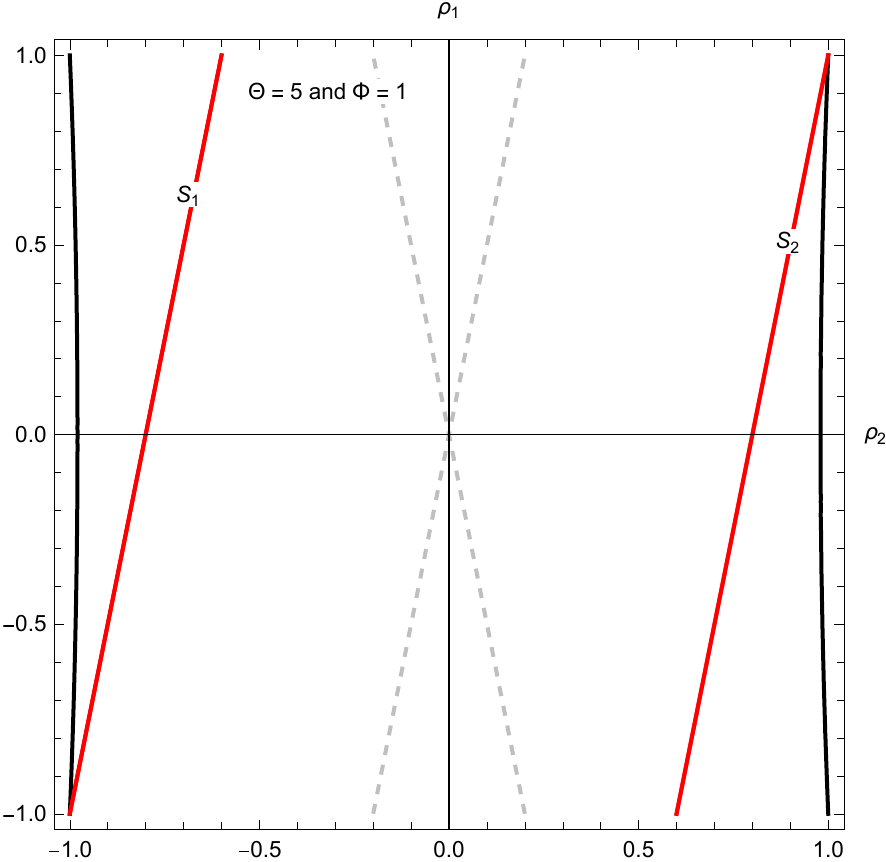}
\caption{$\Theta\Phi>1$ and $\Phi=1$}
\label{sample-figure_part_e}
\end{subfigure}\\

\caption{The red areas show all possible shapes of the $R_{\Theta, \Phi}$ when $\Theta>1$ and $\Theta\Phi\geq 1$. The black line is the graph of the hyperbola $H_{\Theta,\Phi}$. The two dashed lines are the asymptotes $\rho_{1}=\pm\Theta\Phi\rho_{2}$ of $H_{\Theta, \Phi}$.}
\label{Figura_insieme_ammissibile}
\end{figure}

\end{document}